\newtheorem{theorem}{Theorem}
\newtheorem{lemma}[theorem]{Lemma}
\newtheorem{proposition}[theorem]{Proposition}
\newtheorem{remark}[theorem]{Remark}
\newtheorem{assumption}[theorem]{Assumption}
\begin{document}
 % Start your text

\title[Asian Options with Jumps]
{Asymptotics for Short Maturity Asian Options in Jump-Diffusion models with Local Volatility}

\author{Dan Pirjol}
\address
{School of Business\newline
\indent Stevens Institute of Technology\newline
\indent Hoboken, NJ-07030\newline
\indent United States of America}
\email{dpirjol@gmail.com}

\author{Lingjiong Zhu}
\address
{Department of Mathematics\newline
\indent Florida State University\newline
\indent 1017 Academic Way\newline
\indent Tallahassee, FL-32306\newline
\indent United States of America}
\email{zhu@math.fsu.edu}

%%% v1
%\date{30 September 2016}
%%% v2
%\date{30 September 2019}
%\date{29 August 2023}
\date{27 February 2024}

\subjclass[2010]{91G20,91G80,60J75}%Derivative securities, financial applications of other theories, jump processes
\keywords{Asian options, short maturity, L\'{e}vy jumps, local volatility.}

\begin{abstract} 
We present a study of the short maturity asymptotics for Asian options in a 
jump-diffusion model with
a local volatility component, where the jumps are modeled as a compound Poisson
process. The analysis for out-of-the-money Asian options is extended to models with L\'evy jumps, 
including the exponential L\'{e}vy model as a special case. 
Both fixed and floating strike Asian options are considered.
Explicit results are obtained for the first-order
asymptotics of the Asian options prices for a few popular models in the literature: the Merton jump-diffusion model,
the double-exponential jump model, and the Variance Gamma model. 
We propose an analytical approximation for Asian option prices which satisfies the constraints from the short-maturity asymptotics, and test it against Monte Carlo simulations. The asymptotic
results are in good agreement with numerical simulations 
for sufficiently small maturity.
\end{abstract}

\maketitle

\section{Introduction}

Asian options are popular instruments traded in many financial markets,
on underlyings such as commodity futures, equities, indices and currency
exchange rates (FX). Compared to vanilla European options they have the
advantage that they are less sensitive to short term price fluctuations
of the underlying asset, due to their averaging property. Typically, such
options have a payoff of the form:
\begin{equation}
\mathrm{Payoff} = \left( \theta(A_T - K), 0 \right)^{+}\,,
\end{equation}
where $A_T$ is the average of the asset price $S_t$ over an averaging period
$[0, T]$, $K>0$ is the strike price, $\theta =\pm 1$ for a call/put Asian option, and $x^{+}$ denotes $\max(x,0)$ for any $x\in\mathbb{R}$. 
Although in practice the averaging is in discrete time (daily averaging), 
it is convenient to approximate $A_T$  with the continuous time average:
\begin{equation}\label{A:T:defn}
A_T := \frac{1}{T} \int_{0}^{T} S_t dt \,.
\end{equation}

Most of the theoretical work on pricing Asian options assumes that the asset 
price follows a diffusion process, with continuous sample paths. We summarize briefly a few results using methods which can be extended to models with jumps. 

Geman and Yor \cite{GY} found an exact result for the price of an Asian 
option in the Black-Scholes model with random maturity $T_\lambda \sim\mathrm{Exp}(\lambda)$, distributed according to an exponential distribution with 
parameter $\lambda$. 
This reduces the pricing of such options to the inversion of a Laplace transform. 
See \cite{CS,DufresneReview,FMW} for details of application of this method.

A generalization of this approach was proposed by Cai and Kou \cite{CaiKou}
who showed that the double Laplace transform of the Asian option prices with
respect to strike and maturity has a very simple form and can be computed
exactly in a wide class of models. This reduces the Asian options pricing
problem to the inversion of a double Laplace transform, for which efficient
numerical methods are available. This method was applied in \cite{Hackmann2014}
to models driven by hyper-exponential L\'evy processes, which are generalizations of models with hyper-exponential jumps.

The Cai and Kou approach has been simplified by Cai et al.
\cite{CSK} by approximating the underlying process for $S_t$ with a 
finite state Markov chain. The double Laplace transform of the Asian price 
with respect to strike and maturity can be expressed as a matrix multiplication. 
This method has been simplified further in \cite{Cui2018} by performing one Laplace
inversion analytically, which reduces the problem to the inversion of a single 
Laplace transform.

Another approach proposed is the PDE method \cite{RogersShi,Vecer,VecerXu}.
The Asian option pricing PDE can be solved either numerically, or using asymptotic 
expansion methods. We note the paper of \cite{FPP2013} which applied this 
method to derive precise expansions for Asian option prices in the local 
volatility model.  
Another very flexible method which can be used for models where 
$\log S_t$ has independent increments is a backward recursion combined with 
Fourier inversion methods. This was first proposed for the Black-Scholes (BS) model by 
Carverhill and Clewlow \cite{CarvClewlow}, and improved by Benhamou \cite{Benhamou}. 
The method was later extended to 
exponential L\'{e}vy models by Fusai \cite{Fusai2004} and Fusai and Meucci \cite{Fusai2008}.
An alternative
recursion method has been presented in \cite{PZIME} for pricing discrete sampled
Asian options in the BS model.

In \cite{PZLV} the 
authors obtained an asymptotic 
expansion for Asian options prices in the short-maturity limit $T\to 0$
in the local volatility model. Using large deviations 
theory methods \cite{Dembo,VaradhanII}, asymptotics for out-of-the money Asian options have been derived. 
For a general local volatility model the short maturity asymptotics 
is expressed in terms of a rate function which is given in terms of quadratures.
The result can be put in explicit form for the CEV model \cite{PZCEV}. 
The at-the-money (ATM) asymptotics has been obtained as well,
which is dominated by the fluctuations of the asset price around the spot price. 
We mention also the results of 
Gobet and Miri \cite{GobetMiri} who studied the expansion of the time-average of a diffusion in a small parameter such as time or volatility using Malliavin calculus, with applications to pricing Asian options. The implied volatility of Asian options 
in stochastic volatility models was studied recently in Al\`os et al. \cite{Alos2022} 
in a region of strikes close to the ATM point.
Also, Cai et al. \cite{CaiLiShi} presented closed-form expansions for Asian options with discrete time averaging in diffusion models.

One important question is the impact of jumps on the pricing of Asian options.
It is intuitively clear that jumps can have an important effect on the price of
an out-of-the-money (OTM) Asian options with small maturity, as the jump could possibly
take the option into in-the-money. Evidence for discontinuous behavior in the
dynamics of many financial assets has been presented in \cite{CarrWu}.
Since the economic motivation for trading
Asian options is to smooth out the impact of short term price fluctuations,
it is important to have a quantitative understanding of the jumps impact. 

The simplest jump-diffusion model is the Merton model \cite{Merton}, where the 
log-asset price $\log S_t$ is the sum of a standard Brownian motion plus
a compound Poisson process with constant intensity $\lambda$ and normally
distributed jump size. The pricing of Asian options on commodity futures
including stochastic volatility and jumps distributed according to the Merton
model has been presented in \cite{KyrQF}.

The impact of adding exactly one jump to the Black-Scholes
model on the pricing of Asian options has been studied in \cite{ChowLin}.
Another popular model is the double-exponential jump model of Kou \cite{Kou},
where the jump sizes can be both positive and negative, and are exponentially
distributed. 
American option pricing under this model has been studied in \cite{KouWang}, 
and pricing of Asian options has been presented in \cite{CaiKou}.

The model of \cite{Kou} has been generalized to the hyper-exponential model (HEM)
where the jump size distribution is a linear superposition of exponential 
distributions with the density:
\begin{equation}\label{HEMdef}
f_Y(x) = \sum_{i=1}^n p_i \eta_i e^{-\eta_i x} 1_{x\geq 0} +
\sum_{j=1}^n q_j \theta_j e^{\theta_j x} 1_{x < 0} \,.
\end{equation}
This model is widely used in the mathematical finance literature, and has been 
used for pricing options and exotic derivatives, such as barrier and American-type options 
by Boyarchenko and Levendorskii \cite{BL02} and Cai and Kou \cite{CaiKou} for Asian options.
By Bernstein's theorem the hyper-exponential 
distribution can approximate any L\'evy process with completely monotone 
density $\nu(dx)$.\footnote{The L\'evy density $\nu(dx) = \nu(x)dx$ is said to be 
completely monotone
if and only if, for all $k\geq 0$, one has $(-1)^k d\nu(x)/dx^k > 0$ for $x>0$, and
the same condition holds for $x<0$ with $\nu(-x)$. }
L\'{e}vy processes commonly used in finance for the CGMY, NIG and VG 
models are completely monotone. 

We comment also on the PDE method, which can be applied also in the presence 
of jumps \cite{Vecer,VecerXu}, when the PDE becomes an integro-differential 
equation. The smoothness properties
of the solution of this equation are not trivial, and they have been studied in
\cite{BayXing}.

We will study in this paper the impact of jumps on the short-maturity
asymptotics of Asian options. In contrast to the European options which have
been considered in the literature in the regime of small maturity, the
case of the Asian options has been less well studied.
The short-maturity asymptotics for European call and put options has been 
studied in the literature, under exponential L\'evy models 
and jump-diffusion models.

We summarize here a few results for the exponential L\'{e}vy models, and we refer to \cite{FLF} for
a more detailed list of references. 
See also \cite{Alos2007} for work on short-maturity expansions in jump-diffusion models using Malliavin calculus methods. 
Consider an exponential L\'evy model $S_t = S_{0}e^{X_t}$ where $X_t$
is a L\'evy process with density $\nu(dx)$ that starts at $0$ at time zero. In a wide class of
such models, the leading short-maturity asymptotics
of the European call options is given by \cite{BL02}:
\begin{equation}\label{EurSmallMat}
\lim_{T\to 0} \frac{1}{S_0 T} \mathbb{E}[(S_T - K)^+] = 
\int (e^x - (K/S_{0}))^+ \nu(dx)\,,\qquad K>S_{0}\,.
\end{equation}
A sufficient condition for this result to hold is $\int \min(|x|^2 ,1) e^x 
\nu(dx)<\infty$. See \cite{FL08} for weaker conditions. 
The next-to-leading order correction of $O(T^2)$ to the short-maturity asymptotics 
has been obtained in \cite{FLF}, and the leading asymptotics of the ATM skew was studied in \cite{FLO}.

In this paper we consider the pricing of Asian options in a wide class of
jump-diffusion models where the diffusion is given by a local volatility
model, and the jumps are modeled as compound Poisson processes (Section~\ref{sec:main}). 
This generalizes the Merton and doubly exponential models by allowing for a more general diffusion process. 
We consider the generalization to a general exponential L\'{e}vy process  in Section~\ref{sec:Levy}.
Asymptotics for floating strike Asian options are discussed in Section~\ref{sec:floating}.

A summary of the main results of the paper is given in Table~\ref{tab:payoffs}.
The table shows the instruments considered, their payoffs, and the leading short-maturity asymptotics for their prices in different regimes (OTM/ATM), under the models considered. 

Explicit asymptotic results are given for the Asian options prices in
the short maturity limit for some of the most popular models with jumps in the literature:
the Merton jump-diffusion model, the double-exponential jump diffusion model, and the Variance Gamma model (Section~\ref{sec:examples}). 

\begin{table}[htbp]
  \centering
  \caption{Summary of main results of the paper. The table shows the instruments considered, their payoff and the leading $T\to 0$ asymptotics of their price $V(T)$ with $V(T)=C(T)$ for call and $V(T)=P(T)$ for put options.
  The short-maturity asymptotics is different for OTM/ATM regimes.}
      \begin{tabular}{|l|cccc|}
    \hline
Instrument/Payoff & Jump model & $\lim_{T\to 0} T^{-p} V(T)$ & $p$ & Condition \\
    \hline
    \hline
\multicolumn{5}{|c|}{Fixed strike Asian options}  \\
\hline
Call $(A_T - K)^+$ & Compound Poisson & Theorem 2 (OTM) & 1 & Assump.~\ref{Assump1} \\
                                     &  & Theorem 4 (ATM) & 1/2 & Assump.~\ref{Assump1} \\
 & L\'evy jumps             & Theorem 8 (OTM)    & 1     & Assump.~\ref{assump2} \\
 \hline
Put $(K - A_T)^+$ & Compound Poisson & Theorem 3 (OTM) & 1 & Assump.~\ref{Assump1} \\
                                    & & Theorem 4 (ATM) & 1/2 & Assump.~\ref{Assump1} \\
 & L\'evy jumps & Theorem 9 (OTM) & 1 & Assump.~\ref{assump2} \\
\hline
\hline
\multicolumn{5}{|c|}{Floating strike Asian options}  \\
\hline
Call $(\kappa S_T - A_T)^+$ & L\'evy jumps & Theorem 12 (OTM) & 1 & Assump.~\ref{assump2} \\
                                & Compound Poisson & Theorem 14 (ATM) & 1/2 & Assump.~\ref{Assump1} \\
\hline
Put $(A_T - \kappa S_T)^+$ & L\'evy jumps & Theorem 13 (OTM) & 1 & Assump.~\ref{assump2} \\
                                & Compound Poisson & Theorem 14 (ATM) & 1/2 & Assump.~\ref{Assump1} \\
\hline
\end{tabular}%
  \label{tab:payoffs}%
\end{table}%

The asymptotic results are compared 
with independent MC simulations of Asian options in these models in Section~\ref{sec:numerical}. 
The technical proofs of all the theoretical results are provided in Appendix~\ref{sec:proofs}.

%%%%%%%%%%%%%%%%%%%%%%%%%%%%%
\section{Main Results}\label{sec:main}

Let $S_{t}$ be the asset price process. 
Assume that $r$ is the risk-free rate and $q$ is the dividend yield.
We are interested in the short-maturity limit as the maturity $T\rightarrow 0$.
In Pirjol and Zhu \cite{PZLV}, the asset price is assumed to follow a local volatility model:
\begin{equation}
\frac{dS_{t}}{S_{t}}=(r-q)dt
+\sigma(S_{t})dW_{t},
\qquad S_{0}>0.
\end{equation}
We assume that the local volatility function $\sigma(\cdot)$ satisfies
\begin{align}
&0<\underline{\sigma}\leq\sigma(\cdot)\leq\overline{\sigma}<\infty,\label{assumpI}
\\
&|\sigma(e^x)-\sigma(e^y)|\leq M|x-y|^{\alpha},\label{assumpII}
\end{align}
for some fixed $M,\alpha>0$ for any $x,y$ and $0<\underline{\sigma}<\overline{\sigma}<\infty$ are some fixed constants.
This is the same assumption used in Pirjol and Zhu \cite{PZLV}
to obtain the short maturity asymptotics for Asian options in local volatility models.
This assumption originally came from the paper of Varadhan \cite{Varadhan67} for short time interval diffusion processes. 

In this paper, we add the compound Poisson jumps independent of the diffusion part (and later we will extend our results
to allow L\'{e}vy jumps in Section~\ref{sec:Levy}), that is,
\begin{equation}\label{SJD}
S_{t}=\hat{S}_{t}e^{\sum_{i=1}^{N_{t}}Y_{i}},
\end{equation}
where $\hat{S}_{t}$ follows a local volatility model:
\begin{equation}\label{LV:SDE}
\frac{d\hat{S}_{t}}{\hat{S}_{t-}}=(r-q)dt
-\lambda\mu dt+\sigma(\hat{S}_{t})dW_{t},\qquad\hat{S}_{0}=S_{0},
\end{equation}
where $\sum_{i=1}^{N_{t}}Y_{i}$ is a compound Poisson process independent of $\hat{S}$ process, and
$\mu=\mathbb{E}[e^{Y_{1}}-1]$, and 
the jumps $Y_{i}$ are i.i.d. with the
probability distribution function $P(y)$, $-\infty<y<\infty$, that is, we allow
both positive and negative jumps and $Y_{i}$'s are independent of the Poisson process $N_{t}$
with an intensity $\lambda>0$.

Extensions of the local volatility model by adding jumps have been considered in the literature by
Andersen and Andreasen \cite{Andersen2000},
Benhamou et al. \cite{BenGobMiri} and
Pagliarani and Pascucci \cite{PagPasc}.

We will make the following assumption about the jump size distribution in the
compound Poisson process:
\begin{assumption}\label{Assump1}
The jumps $Y_1$ satisfy the condition
\begin{equation}\label{assumpIII}
\mathbb{E}[e^{\theta Y_{1}}]<\infty,
\qquad\text{for some $\theta>2$}.
\end{equation}
\end{assumption}

The price process $S_t$ can be written alternatively as
\begin{equation}
S_{t}=S_{0}e^{\int_{0}^{t}\sigma(\hat{S}_{s})dW_{s}
+(r-q-\lambda\mu)t
-\int_{0}^{t}\frac{1}{2}\sigma^{2}(\hat{S}_{s})ds
+\sum_{i=1}^{N_{t}}Y_{i}} .
\end{equation}

As the maturity $T\rightarrow 0$, we have $S_{T}\rightarrow S_{0}$ a.s.
We are interested in the asymptotics of
the out-of-the money Asian call option price:
\begin{equation}
C(T)=e^{-rT}\mathbb{E}\left[\left(A_{T}-K\right)^{+}\right],
\end{equation}
where $A(T)$ is defined in \eqref{A:T:defn}.

The $T\to 0$ asymptotics
of out-of-the-money Asian call options in the model where $S_t$ follows a local volatility model without jumps was studied
in Pirjol and Zhu \cite{PZLV}
and it is of the order $e^{-O(\frac{1}{T})}$. 
Similar results were obtained in \cite{Arguin2018}.
In the presence of jumps, the leading order will change.
The intuition behind this is that
the probability of one jump is of the order $O(T)$,
and one single jump can make the asset price
process back to the in-the-money region.
Since the probability of getting into the in-the-money region
from the contributions of the diffusions is 
of the order $e^{-O(\frac{1}{T})}$, it is intuitively
clear that in the presence of jumps, $C(T)=O(T)$. 
So the key question here is to determine the precise
leading order asymptotics for $C(T)$ 
as the maturity $T\rightarrow 0$.

We have the following main result:

\begin{theorem}\label{ThmI}
Assume that the asset price $S_t$ follows the process (\ref{SJD}) and the jump distribution satisfies Assumption~\ref{Assump1}.
Then the leading order asymptotics for out-of-the-money, i.e. $K>S_{0}$, short maturity Asian call options
is given by
\begin{equation}\label{aC}
\lim_{T\rightarrow 0}\frac{C(T)}{T}
=\lambda\int_{0}^{1}\int_{\log(\frac{K-S_{0}t}{S_{0}(1-t)})}^{\infty}\left(S_{0}t+S_{0}e^{y}(1-t)-K\right)dP(y)dt,
\end{equation}
provided that the limit is positive.
\end{theorem}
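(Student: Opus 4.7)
The plan is to decompose $C(T)$ by conditioning on the number of jumps $N_T$ in $[0,T]$,
\begin{equation*}
C(T) = e^{-rT}\sum_{k=0}^\infty \mathbb{E}\bigl[(A_T-K)^+ \mathbf{1}_{\{N_T = k\}}\bigr],
\end{equation*}
and treat the three cases $k=0$, $k=1$, $k\geq 2$ separately. Since $N_T \sim \mathrm{Poisson}(\lambda T)$, one has $P(N_T = 1) = \lambda T + O(T^2)$ and $P(N_T \geq 2) = O(T^2)$, so heuristically only the single-jump case can produce an $O(T)$ term.

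For $k=0$, the process coincides with the pure local-volatility diffusion $\hat S_t$. Since $K > S_0$ is OTM, the large-deviations result of Pirjol--Zhu \cite{PZLV} yields a bound of the form $e^{-I/T}$ for some $I>0$, which is $o(T)$. For $k \geq 2$, H\"older's inequality with exponents $\theta$ and $\theta/(\theta-1)$ gives
\begin{equation*}
\mathbb{E}\bigl[(A_T-K)^+ \mathbf{1}_{\{N_T \geq 2\}}\bigr] \leq \mathbb{E}[A_T^\theta]^{1/\theta}\, P(N_T \geq 2)^{(\theta-1)/\theta} = O\bigl(T^{2(\theta-1)/\theta}\bigr),
\end{equation*}
where $\mathbb{E}[A_T^\theta]$ is bounded uniformly for small $T$ by combining $\mathbb{E}[e^{\theta Y_1}]<\infty$ from Assumption~\ref{Assump1}, the boundedness of $\sigma$ in (\ref{assumpI}), and classical exponential-moment estimates for the stochastic exponential of $\int \sigma(\hat S)\,dW$. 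Because $\theta > 2$, the exponent $2(\theta-1)/\theta > 1$, so this contribution is $o(T)$; this is precisely where the strict inequality $\theta > 2$ enters.

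The leading contribution comes from $k=1$. Conditional on $N_T = 1$, the jump time $\tau$ is uniform on $[0,T]$ and the jump size $Y$ is independent of $(\tau,\hat S)$ with law $P$. Writing $t=\tau/T \in [0,1]$, on this event
\begin{equation*}
A_T = \frac{1}{T}\int_0^{\tau}\hat S_s\,ds + \frac{e^Y}{T}\int_\tau^T \hat S_s\,ds.
\end{equation*}
Boundedness of $\sigma$ together with Gronwall/BDG estimates yields $\sup_{0 \leq s \leq T}|\hat S_s - S_0| \to 0$ in every $L^p$ as $T\to 0$, so after the rescaling the two time integrals converge to $S_0 t$ and $S_0(1-t)$, and hence $A_T \to S_0 t + S_0 e^Y(1-t)$. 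Dominated convergence, justified by the same moment bounds as in the $k \geq 2$ step, upgrades this to $L^1$ convergence with respect to the conditional law, giving
\begin{equation*}
\mathbb{E}\bigl[(A_T-K)^+ \bigm| N_T = 1\bigr] \to \int_0^1\!\!\int_{\mathbb{R}} \bigl(S_0 t + S_0 e^y(1-t) - K\bigr)^+ dP(y)\,dt.
\end{equation*}
Rewriting the positive part as an integral over $y > \log\bigl((K - S_0 t)/(S_0(1-t))\bigr)$ (well-defined since $K > S_0 \geq S_0 t$ for $t \in [0,1)$), and multiplying by $e^{-rT}P(N_T=1)/T \to \lambda$, produces (\ref{aC}).

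The main technical obstacle is the interchange of limit and expectation for the $k=1$ term: since $Y$ is unbounded, uniform integrability of $(A_T-K)^+$ under the joint law of $(\tau,Y,\hat S)$ must be established carefully. The clean approach is to condition first on $Y$, apply the pathwise short-time estimates for $\hat S$ under a fixed value of $Y$, and then integrate $Y$ out using $\mathbb{E}[e^{\theta Y_1}]<\infty$; the independence of $Y$ from $\hat S$ is what decouples the two sources of integrability and keeps the argument manageable.
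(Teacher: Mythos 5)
Your proposal is correct and follows the same skeleton as the paper's proof: decompose by jump count, dispose of $N_T=0$ via the large-deviations OTM estimate from \cite{PZLV}, kill $N_T\geq 2$ via H\"older against $P(N_T\geq 2)=O(T^2)$, and extract the limit from $N_T=1$. The two places where you diverge from the paper's route are worth noting, and both are legitimate. First, for the $N_T\geq 2$ bound, you take the H\"older exponent to be $\theta$ itself, whereas the paper takes some $p>2$ close to $2$ and then applies a second, inner H\"older to separate the Dol\'eans exponential from the jump part; your version works because the compound Poisson part is independent of the diffusion, so $\mathbb{E}[S_T^\theta]$ factors and only $\mathbb{E}[e^{\theta Y_1}]<\infty$ is needed, making the extra inner H\"older unnecessary -- a slight simplification. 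Second, and more substantively, for the crucial $N_T=1$ term you argue for the passage to the limit via uniform integrability (moment bounds plus conditioning on $Y$ and integrating it out with $\mathbb{E}[e^{\theta Y_1}]<\infty$), whereas the paper sidesteps uniform integrability entirely by writing $(A_T-K)^+=(K-A_T)^+ + (A_T-K)$ and using bounded convergence on the put term (bounded by $K$) together with an explicit computation of the forward $\mathbb{E}[A_T]$. The put-call parity trick is a cleaner device that avoids the domination analysis you correctly flag as the ``main technical obstacle,'' but the uniform-integrability argument you sketch, done carefully, closes that gap; you should in particular make explicit that the dominating function after conditioning on $Y=y$ is proportional to $1+e^y$, which is $P$-integrable by Assumption~\ref{Assump1}.
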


Similarly, we have the following result for OTM Asian put options:

\begin{theorem}\label{ThmPut}
Assume that the asset price $S_t$ follows the process (\ref{SJD}) and the jump distribution satisfies Assumption~\ref{Assump1}.
Then the leading order asymptotics for out-of-the-money, i.e. $K<S_{0}$, short maturity Asian put options is given by
\begin{equation}\label{aP}
\lim_{T\rightarrow 0}\frac{P(T)}{T}
=\lambda\int_{0}^{\frac{K}{S_{0}}}\int_{-\infty}^{\log(\frac{K-S_{0}t}{S_{0}(1-t)})}\left(K-S_{0}t-S_{0}e^{y}(1-t)\right)dP(y)dt,
\end{equation}
provided that the limit is positive.
\end{theorem}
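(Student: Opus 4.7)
The plan is to condition on the number of compound Poisson jumps in $[0,T]$, as for Theorem~\ref{ThmI}, and isolate the single-jump contribution. Writing
\[
P(T) = e^{-rT}\sum_{k=0}^{\infty}\mathbb{E}\!\left[(K-A_T)^+\mathbf{1}_{\{N_T=k\}}\right],
\]
I expect the $k=0$ and $k\geq 2$ pieces to be negligible relative to $T$: on $\{N_T=0\}$ the asset reduces to $\hat S_t$ and the OTM Asian put price in the pure local volatility model decays like $e^{-O(1/T)}$ by Pirjol and Zhu~\cite{PZLV}; on $\{N_T\geq 2\}$ the trivial bound $(K-A_T)^+\leq K$ together with $\mathbb{P}(N_T\geq 2)=O(T^2)$ yields an $O(T^2)$ contribution, which vanishes after dividing by $T$.

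The leading $O(T)$ term comes from $\{N_T=1\}$. Conditioning on the single jump time $\tau$ (uniform on $[0,T]$) and size $Y_1\sim dP$, the pathwise average is
\[
A_T = \frac{1}{T}\int_0^\tau \hat S_t\,dt + \frac{e^{Y_1}}{T}\int_\tau^T \hat S_t\,dt.
\]
Standard SDE moment estimates using \eqref{assumpI}--\eqref{assumpII} give $\mathbb{E}[\sup_{t\leq T}|\hat S_t-S_0|^2]=O(T)$, so with $t_0=\tau/T$ (uniform on $[0,1]$) one obtains, almost surely,
\[
A_T \longrightarrow S_0 t_0 + S_0(1-t_0)e^{Y_1},\qquad T\to 0.
\]
Since the put payoff is bounded by $K$, dominated convergence yields
\[
\lim_{T\to 0}\mathbb{E}\!\left[(K-A_T)^+ \,\big|\, N_T=1\right]
= \int_0^1 \int_{\mathbb{R}} \bigl(K-S_0 t - S_0(1-t)e^{y}\bigr)^+ dP(y)\,dt,
\]
and multiplying by $\mathbb{P}(N_T=1)/T\to\lambda$ and $e^{-rT}\to 1$ produces the stated limit. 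Unfolding the positive part, $K - S_0 t - S_0(1-t)e^{y}>0$ iff $y<\log((K-S_0 t)/(S_0(1-t)))$, which in turn forces $t<K/S_0$ (using $K<S_0$), giving exactly the double integral in \eqref{aP}.

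The main technical obstacle is a careful justification of the Fubini interchange and uniform-in-$T$ domination needed to pass to the limit inside the $dP(y)$ integration. Assumption~\ref{Assump1} enters both through the drift compensator $-\lambda\mu t$ in \eqref{LV:SDE}, which requires $\mathbb{E}[e^{Y_1}]<\infty$, and through the $L^p$ control of $\hat S_t$ with $p>1$ that must absorb the $e^{Y_1}$ factor appearing in $A_T$ on the single-jump event. Modulo these moment controls and the $O(e^{-c/T})$ large deviation estimate on the no-jump branch imported from \cite{PZLV}, the argument reduces to Poisson thinning plus dominated convergence.
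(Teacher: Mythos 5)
Your proof is correct and follows essentially the same strategy as the paper's proof of Theorem~\ref{ThmI} (condition on $N_T=0$, $N_T=1$, $N_T\geq 2$; show the first is $e^{-O(1/T)}$ and the last is $o(T)$; extract the limit from the single-jump event), which is what the paper has in mind when it states the put proof is ``very similar'' and omits it. You correctly exploit the fact that the put payoff is bounded by $K$, which simplifies two of the three steps relative to the call: the $\{N_T\geq 2\}$ term is handled by the trivial bound $(K-A_T)^+\leq K$ and $\mathbb{P}(N_T\geq 2)=O(T^2)$, rather than the H\"older inequality and Doob's martingale estimate the paper needs for the unbounded call payoff; and on $\{N_T=1\}$ you can apply dominated convergence directly, rather than passing through put--call parity as the paper does for the call. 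One small imprecision: invoking an $L^2$ estimate $\mathbb{E}[\sup_{t\leq T}|\hat S_t - S_0|^2]=O(T)$ only gives convergence in probability, not the almost sure convergence you then use; but the a.s.\ convergence $\hat S_{sT}\to S_0$ follows directly from pathwise continuity of $\hat S$ (as the paper argues), so this is easily repaired and does not affect the validity of the argument.
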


Next, we study the short-maturity asymptotics for ATM Asian options. 
The key observation is that for ATM Asian options,
a jump occurs with probability $O(T)$, whereas
the contribution from the local volatility (diffusion) part of the asset price \eqref{SJD}
has order $O(\sqrt{T})$ from the result
in Pirjol and Zhu \cite{PZLV} for local volatility models (without jumps),
and hence the leading-order short-maturity asymptotics for ATM Asian options 
is provided by the local volatility (diffusion) part, instead of the jumps
as in the OTM case (Theorem~\ref{ThmI} and Theorem~\ref{ThmPut}). 
In particular, we have the following result.

\begin{theorem}\label{ThmIII}
Assume that the asset price $S_t$ follows the process (\ref{SJD}) and the jump distribution satisfies Assumption~\ref{Assump1}.
The leading order asymptotics for at-the-money, i.e. $K=S_{0}$, short maturity Asian call and put options
is given by
\begin{equation}
\lim_{T\rightarrow 0}\frac{C(T)}{\sqrt{T}}
=\lim_{T\rightarrow 0}\frac{P(T)}{\sqrt{T}}
=\frac{1}{\sqrt{6\pi}}\sigma(S_{0})S_{0}.
\end{equation}
\end{theorem}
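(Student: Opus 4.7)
The guiding idea, already signalled in the paragraph preceding the theorem, is that on $\{N_T\ge 1\}$ (probability $O(T)$) the payoff contributes only $O(T)$, while on $\{N_T=0\}$ the process $S_t$ coincides with the local volatility diffusion $\hat S_t$ from \eqref{LV:SDE} (up to a harmless drift shift $-\lambda\mu$), for which the ATM Asian asymptotics was established in \cite{PZLV}. I would therefore split
\begin{equation*}
e^{rT}C(T) = \mathbb{E}\bigl[(A_T-S_0)^+\mathbf{1}_{\{N_T=0\}}\bigr] + \mathbb{E}\bigl[(A_T-S_0)^+\mathbf{1}_{\{N_T\ge 1\}}\bigr],
\end{equation*}
and treat the two pieces separately.

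The jump contribution I would bound by Cauchy--Schwarz,
\begin{equation*}
\mathbb{E}\bigl[(A_T-S_0)^+\mathbf{1}_{\{N_T\ge 1\}}\bigr]\le \sqrt{\mathbb{E}[(A_T-S_0)^2]}\,\sqrt{\mathbb{P}(N_T\ge 1)}.
\end{equation*}
Here $\mathbb{P}(N_T\ge 1)=1-e^{-\lambda T}=O(T)$. For the second moment, Jensen gives $(A_T-S_0)^2\le \tfrac{1}{T}\int_0^T(S_t-S_0)^2\,dt$, and Assumption~\ref{Assump1} (with $\theta>2$) together with the local volatility bounds \eqref{assumpI}--\eqref{assumpII} yields, via a standard It\^o/compensator estimate, $\mathbb{E}[(S_t-S_0)^2]=O(t)$ uniformly on $[0,T]$. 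Hence this piece is $O(T)=o(\sqrt{T})$.

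For the no-jump piece, independence of $\hat S$ and $N$ gives
\begin{equation*}
\mathbb{E}\bigl[(A_T-S_0)^+\mathbf{1}_{\{N_T=0\}}\bigr] = e^{-\lambda T}\,\mathbb{E}\!\left[\left(\tfrac{1}{T}\textstyle\int_0^T\hat S_t\, dt - S_0\right)^{+}\right].
\end{equation*}
Since $e^{-\lambda T}=1+O(T)$ and the extra drift $-\lambda\mu$ perturbs $\hat S_t$ only by $O(t)$, while the diffusion already generates a fluctuation of size $\sqrt t$, applying the ATM local volatility result of \cite{PZLV} to $\hat S$ gives the leading term $\frac{1}{\sqrt{6\pi}}\sigma(S_0)S_0\sqrt{T}+o(\sqrt{T})$. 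The constant $\tfrac{1}{\sqrt{6\pi}}$ arises because at leading order $\hat S_t\approx S_0+\sigma(S_0)S_0 W_t$, so $\frac{1}{T}\int_0^T\hat S_t\,dt-S_0\approx \sigma(S_0)S_0\cdot\frac{1}{T}\int_0^T W_t\,dt\sim\mathcal{N}(0,\sigma(S_0)^2 S_0^2 T/3)$, whose positive part has expectation $\sigma(S_0)S_0\sqrt{T/(6\pi)}$. For the put I would invoke Asian put--call parity: at ATM, $C(T)-P(T)=e^{-rT}(\mathbb{E}[A_T]-S_0)$, and since the jumps are compensated $\mathbb{E}[A_T]-S_0=\tfrac{S_0}{T}\int_0^T(e^{(r-q)t}-1)\,dt=O(T)$, so $P(T)=C(T)+O(T)$ and the two leading asymptotics coincide.

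The main obstacle is making the reduction to the pure local volatility model quantitative: I must verify that the drift perturbation by $-\lambda\mu$ does not disturb the leading $\sqrt{T}$ coefficient and that the tail event $\{N_T\ge 1\}$ contributes $o(\sqrt{T})$ rather than $O(\sqrt{T})$. Both rely on the uniform second-moment bound for $S_t-S_0$, which is precisely where Assumption~\ref{Assump1} ($\theta>2$) enters; in particular $\theta=2$ would be too weak because we would lose one of the two $\sqrt T$ factors in the Cauchy--Schwarz bound. The underlying local volatility ATM expansion itself is supplied by \cite{PZLV}, whose proof uses the H\"older regularity \eqref{assumpII} to replace $\sigma(\hat S_t)$ by $\sigma(S_0)$ at leading order.
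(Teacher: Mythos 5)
Your argument reproduces the paper's proof in all its essential steps: split on $\{N_T=0\}$ versus $\{N_T\ge 1\}$, invoke the local-volatility ATM Asian result from \cite{PZLV} on the no-jump event (with the $-\lambda\mu$ drift absorbed and $e^{-\lambda T}=1+O(T)$), and show the jump event contributes $o(\sqrt{T})$; put--call parity handles the put, matching the paper's ``analogous'' remark. The one technical difference is how you kill the jump term: the paper applies H\"older's inequality with exponents $p>2$, $p'<2$ together with a $T$-uniform bound on $\mathbb{E}[|A_T-S_0|^p]$ (Doob's inequality as in the proof of Theorem~\ref{ThmI}), giving $O(T^{1/p'})$ with $1/p'>1/2$; you instead use Cauchy--Schwarz ($p=p'=2$) and recover the decay from the finer estimate $\mathbb{E}[(A_T-S_0)^2]=O(T)$, landing on $O(T)$. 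Both routes are sound, and yours is slightly sharper. However, your closing remark --- that ``$\theta=2$ would be too weak because we would lose one of the two $\sqrt{T}$ factors in the Cauchy--Schwarz bound'' --- is not correct for the Cauchy--Schwarz version: the $O(T)$ second-moment bound only needs $\mathbb{E}[e^{2Y_1}]<\infty$, which $\theta=2$ already gives, and then $\sqrt{O(T)}\cdot\sqrt{O(T)}=O(T)=o(\sqrt{T})$. The strict inequality $\theta>2$ in Assumption~\ref{Assump1} is what forces the paper's choice $p>2$ in H\"older's inequality (and also underlies the $N_T\ge 2$ estimate in the proof of Theorem~\ref{ThmI}), not your Cauchy--Schwarz step.
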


%%%%%%%%%%%%%%%%%%%%%%%%%%%%%%%%%%%%%%%%%%%%
\begin{remark}
Denoting $C_{E}$ and $P_{E}$ the prices of European call
and put options respectively, the leading order asymptotic result with
the underlying asset price process \eqref{SJD} is given by \cite{BL02}:
\begin{align}
&\lim_{T\rightarrow 0}\frac{C_{E}(T)}{T}
=\lambda\int_{-\infty}^{\infty}(S_{0}e^{y}-K)^{+}dP(y),
\qquad\text{for $S_{0}<K$},
\\
&\lim_{T\rightarrow 0}\frac{P_{E}(T)}{T}
=\lambda\int_{-\infty}^{\infty}(K-S_{0}e^{y})^{+}dP(y),
\qquad\text{for $S_{0}>K$},
\end{align}
and for $S_{0}=K$ we have \cite{MN2011}:
\begin{equation}
\lim_{T\rightarrow 0}\frac{C_{E}(T)}{\sqrt{T}}
=\lim_{T\rightarrow 0}\frac{P_{E}(T)}{\sqrt{T}}
=\frac{1}{\sqrt{2\pi}}\sigma(S_{0})S_{0} \,.
\end{equation}
\end{remark}

\begin{remark}
We can link the asymptotics for the Asian options
and that of the European options with the underlying asset price process \eqref{SJD}
as follows:
\begin{align}
&\lim_{T\rightarrow 0}\frac{C(T)}{C_{E}(T)}
=\frac{\int_{0}^{1}\int_{\log(\frac{K-S_{0}t}{S_{0}(1-t)})}^{\infty}\left(S_{0}t+S_{0}e^{y}(1-t)-K\right)dP(y)dt}
{\int_{\log(K/S_{0})}^{\infty}(S_{0}e^{y}-K)dP(y)},
\qquad\text{for $S_{0}<K$},
\\
&\lim_{T\rightarrow 0}\frac{P(T)}{P_{E}(T)}
=\frac{\int_{0}^{\frac{K}{S_{0}}}\int_{-\infty}^{\log(\frac{K-S_{0}t}{S_{0}(1-t)})}\left(K-S_{0}t-S_{0}e^{y}(1-t)\right)dP(y)dt}{\int_{-\infty}^{\log(K/S_{0})}(K-S_{0}e^{y})dP(y)},
\qquad\text{for $S_{0}>K$},
\\
&\lim_{T\rightarrow 0}\frac{C(T)}{C_{E}(T)}
=\lim_{T\rightarrow 0}\frac{P(T)}{P_{E}(T)}
=\frac{1}{\sqrt{3}},
\qquad\text{for $S_{0}=K$}.
\end{align}
\end{remark}

%%%%%%%%%%%%%%%%%%%%%%%%%%%%%%%%%%%%%%%%%%%%%%%%%%%%%%%%%%%%%%%%

\section{Local Volatility Models with L\'{e}vy Jumps}\label{sec:Levy}

The compound Poisson jumps in our model \eqref{SJD}-\eqref{LV:SDE} can be generalized
to L\'{e}vy jumps, so that our extended model can include the exponential L\'{e}vy model as a special case.
Let $X_{t}$ be a L\'{e}vy process with the triplet $(0,b,\nu)$, that is, 
$X_{t}$ admits the decomposition:
\begin{equation}
X_{t}=bt+\int_{0}^{t}\int_{|x|\leq 1}x(\mu-\bar{\mu})(dx,ds)
+\int_{0}^{t}\int_{|x|>1}x\mu(dx,ds),
\end{equation}
where $\mu$ is a Poisson measure on $\mathbb{R}_{+}\times\mathbb{R}\backslash\{0\}$
with mean measure $\bar{\mu}(dx,dt)=\nu(dx)dt$. 

We will require the following assumption about the L\'evy measure.
\begin{assumption}\label{assump2}
The L\'evy measure of the process $X_t$ satisfies the condition
\begin{equation}
\int(x^{2}\wedge 1)e^{\theta x}\nu(dx)<\infty,
\end{equation}
for some $\theta>2$.
\end{assumption}

We define the asset price process as
\begin{equation}\label{SLevy}
S_{t}=\hat{S}_{t}e^{X_{t} },
\end{equation}
where $\hat S_t$ satisfies a local volatility model without jumps: 
\begin{equation}
\frac{d\hat{S}_{t}}{\hat{S}_{t-}}=(r-q)dt
-\mu dt+\sigma(\hat{S}_{t})dW_{t},\qquad\hat{S}_{0}=S_{0},
\end{equation}
and the $X_{t}$ process is independent of the $\hat{S}_{t}$ process, 
and $\mu= \psi(-i)$ is the compensator, where $\psi(u)$ is defined as 
$\mathbb{E}[e^{i u X_1}] = e^{\psi(u)}$.
The asset price model \eqref{SLevy} covers a wide range of models used in finance
such as Merton jump diffusion model \cite{Merton}, double exponential jump model \cite{Kou}
and Variance Gamma model \cite{Madan1990}.

We have the following short-maturity asymptotics for Asian options under the model (\ref{SLevy}).

\begin{theorem}\label{Thm:LevyC}
The leading order asymptotics for out-of-the-money, i.e. $K>S_{0}$, 
short maturity Asian call options under the model (\ref{SLevy})
with the Assumption~\ref{assump2}, is given by
\begin{equation}
\lim_{T\rightarrow 0}\frac{C(T)}{T}
=\int_{0}^{1}\int_{\log(\frac{K-S_{0}t}{S_{0}(1-t)})}^{\infty}\left(S_{0}t+S_{0}e^{y}(1-t)-K\right)\nu(dy)dt.
\end{equation}
\end{theorem}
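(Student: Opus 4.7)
My plan is to approximate the general L\'evy process $X_t$ by its truncation at jump size $\epsilon$, reduce the problem to the compound Poisson case already handled in Theorem~\ref{ThmI}, and pass to the limit $\epsilon\downarrow 0$. Using the L\'evy--It\^o decomposition, write $X_t = X_t^{\leq\epsilon} + X_t^{>\epsilon}$, where
$$X_t^{>\epsilon} = \sum_{s\leq t}\Delta X_s\,\mathbf{1}_{|\Delta X_s|>\epsilon}$$
is compound Poisson with finite intensity $\lambda_\epsilon = \nu(\{|x|>\epsilon\})$ and jump law $dP_\epsilon(y)=\lambda_\epsilon^{-1}\nu(dy)\mathbf{1}_{|y|>\epsilon}$, while $X_t^{\leq\epsilon}$ is an independent L\'evy process gathering the compensated small jumps and the drift. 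Setting $\tilde S_t^{\epsilon} := \hat S_t\,e^{X_t^{\leq\epsilon}}$ yields the factorization $S_t = \tilde S_t^\epsilon\,e^{X_t^{>\epsilon}}$, so the model splits into a (modified) continuous part plus compound Poisson jumps of size at least $\epsilon$.

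Next I would establish the analogue of Theorem~\ref{ThmI} for $(\tilde S_t^\epsilon, X_t^{>\epsilon})$ by conditioning on the count $N_T^\epsilon$ of $\epsilon$-large jumps in $[0,T]$. The event $\{N_T^\epsilon\geq 2\}$ has probability $O(T^2)$, and Assumption~\ref{assump2} yields a uniform $L^p$ bound on $A_T$ for some $p>1$, so this event contributes $o(T)$. The event $\{N_T^\epsilon=0\}$ corresponds to $\tilde S_t^\epsilon$ alone, whose OTM Asian contribution is $o(T)$ by the small-jump estimate discussed below. On the dominant event $\{N_T^\epsilon=1\}$, the jump time $\tau$ is approximately uniform on $[0,T]$ and the jump size $Y$ has law $P_\epsilon$, independent of $\tilde S^\epsilon$. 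Since $\tilde S_t^\epsilon\to S_0$ uniformly in probability on $[0,T]$ as $T\downarrow 0$, one has $A_T\to S_0 t + S_0 e^Y(1-t)$ in probability with $t=\tau/T$, and dominated convergence gives
$$\frac{C(T)}{T} = \lambda_\epsilon\int_0^1\!\!\int\bigl(S_0 t + S_0 e^y(1-t) - K\bigr)^+ dP_\epsilon(y)\,dt + o_\epsilon(1) = \int_0^1\!\!\int_{|y|>\epsilon}\bigl(S_0 t + S_0 e^y(1-t) - K\bigr)^+\nu(dy)\,dt + o_\epsilon(1).$$

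Letting $\epsilon\downarrow 0$ and invoking monotone convergence on the nonnegative integrand produces the claimed formula. The full L\'evy integral is finite by Assumption~\ref{assump2}: the $x^2\wedge 1$ factor absorbs the small-$y$ singularity via $e^y - 1 \sim y$ together with the lower limit $\log((K-S_0 t)/(S_0(1-t)))$ approaching $+\infty$ as $t\uparrow 1$, and the $e^{\theta y}$ factor with $\theta>2$ dominates the large positive $y$ tail where the payoff grows like $S_0 e^y$. Matching upper and lower bounds requires showing that the remainder $o_\epsilon(1)$ vanishes uniformly as $\epsilon\downarrow 0$, which reduces to the same small-jump estimate.

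The main obstacle is precisely this small-jump estimate: showing that the OTM Asian price for the process $\tilde S_t^\epsilon$, which now combines the local volatility diffusion with a possibly infinite-activity small-jump L\'evy component, is $o(T)$ uniformly in small $\epsilon$. Because the jumps of $X_t^{\leq\epsilon}$ are bounded by $\epsilon$, all its exponential moments exist, and an exponential Chernoff bound applied to the exponential martingale $e^{u X_t^{\leq\epsilon} - t\kappa_\epsilon(u)}$ yields $\mathbb{P}(\sup_{t\leq T}\tilde S_t^\epsilon > K)\leq e^{-I(K,\epsilon)/T}$ for some rate $I(K,\epsilon)>0$ when $K>S_0$. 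Combining this with the diffusion LDP of \cite{PZLV} gives the desired $o(T)$ bound and, crucially, the uniformity in $\epsilon$ needed to interchange the limits $T\to 0$ and $\epsilon\downarrow 0$.
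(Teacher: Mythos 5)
Your overall plan — split $X_t$ by jump size, treat the large jumps as a compound Poisson process and reduce to Theorem~\ref{ThmI}, control the small jumps separately — is the right shape, but the way you handle the small-jump part leaves a genuine gap, and it is precisely the step you flag as your ``main obstacle.'' You absorb the small-jump L\'evy component $X_t^{\leq\epsilon}$ into a new process $\tilde S_t^\epsilon = \hat S_t e^{X_t^{\leq\epsilon}}$ and propose to show $\mathbb{E}\bigl[(\tilde A_T^\epsilon - K)^+\bigr]=o(T)$ by an exponential Chernoff bound combined with ``the diffusion LDP of \cite{PZLV}.'' But $\tilde S^\epsilon$ is a jump-diffusion, not a diffusion, so the results of \cite{PZLV} do not apply to it as a black box, and a short-time LDP for a diffusion multiplied by an independent infinite-activity small-jump L\'evy factor is not a known result you can cite — you would be proving a substantial new lemma. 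Furthermore, a probability bound $\mathbb{P}(\sup_t \tilde S_t^\epsilon > K)\leq e^{-I/T}$ does not by itself control the expectation $\mathbb{E}[(\tilde A_T^\epsilon - K)^+]$; you still need a moment bound on the overshoot, uniformly in $\epsilon$, and this uniformity is nontrivial.

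The paper's proof sidesteps this entirely. It writes $X_t = b_\epsilon t + M_t^\epsilon + J_t^\epsilon$, where $M_t^\epsilon$ is the compensated small-jump martingale, and never tries to incorporate $M^\epsilon$ into the continuous part. Instead, it uses the Burkholder--Davis--Gundy inequality to show $\mathbb{P}\bigl(\sup_{t\leq T}|M_t^\epsilon|\geq\delta\bigr)=O(T^{w/2})$ for any $w\geq 1$, splits the expectation on the complementary events, and on the good event $\{\sup_{t\leq T}|M_t^\epsilon|<\delta\}$ replaces $e^{M_t^\epsilon}$ by the constants $e^{\pm\delta}$. This reduces the problem exactly to the setting of Theorem~\ref{ThmI} (a local-volatility diffusion, with modified drift, times a compound Poisson exponential), applied with spot rescaled by $e^{\pm\delta}$, and then $\delta\downarrow 0$ closes the sandwich. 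No LDP for a jump-diffusion is needed.

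A secondary point: the $\epsilon\downarrow 0$ limit and monotone convergence argument you set up at the end are unnecessary. Because $K>S_0$, the inner $y$-integration in the limit formula is over $y\geq \log\bigl((K-S_0 t)/(S_0(1-t))\bigr)$, whose infimum over $t\in(0,1)$ is $\log(K/S_0)>0$. So once $\epsilon<\log(K/S_0)$ (or, in the paper's sandwich, $\epsilon<\log(K/(S_0 e^{-\delta}))$), the truncation indicator $1_{|y|\geq\epsilon}$ is identically $1$ on the domain of integration and the $\epsilon$-truncated formula already equals the full one. The paper simply fixes $\epsilon$ small enough and takes $\delta\to 0$; there is no interchange of limits $T\to 0$ and $\epsilon\downarrow 0$ to justify. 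Recognizing this would have removed the ``uniformity in $\epsilon$'' burden you identify at the end — but the central gap remains the treatment of the small-jump martingale.
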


Similarly, we have the following result for OTM Asian put options:

\begin{theorem}\label{Thm:LevyP}
The leading order asymptotics for out-of-the-money, i.e. $K<S_{0}$, 
short maturity Asian put options under the model (\ref{SLevy})
with the Assumption~\ref{assump2}, is given by
\begin{equation}
\lim_{T\rightarrow 0}\frac{P(T)}{T}
=\int_{0}^{\frac{K}{S_{0}}}\int_{-\infty}^{\log(\frac{K-S_{0}t}{S_{0}(1-t)})}\left(K-S_{0}t-S_{0}e^{y}(1-t)\right)\nu(dy)dt.
\end{equation}
\end{theorem}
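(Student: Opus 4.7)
The plan is to reduce the L\'evy case to the compound Poisson case of Theorem~\ref{ThmPut} by truncating the L\'evy measure at level $\epsilon>0$ and then passing to the limit $\epsilon\downarrow 0$. Fix $\epsilon>0$ and decompose the L\'evy process as $X_t = X_t^{\epsilon} + Z_t^{\epsilon}$, where
\[
Z_t^{\epsilon}:=\sum_{0<s\le t}\Delta X_s\,\mathbf{1}_{\{|\Delta X_s|>\epsilon\}}
\]
is an independent compound Poisson process with intensity $\lambda_\epsilon=\nu(\{|y|>\epsilon\})$ and jump law $P_\epsilon(dy)=\lambda_\epsilon^{-1}\mathbf{1}_{\{|y|>\epsilon\}}\nu(dy)$, while $X_t^{\epsilon}$ is the independent L\'evy process collecting the drift and the compensated small jumps. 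Assumption~\ref{assump2} gives $\int e^{\theta y}\,P_\epsilon(dy)<\infty$ for some $\theta>2$, so the large-jump component satisfies Assumption~\ref{Assump1}.

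Let $\tilde S_t^{\epsilon}:=\hat S_t\,e^{X_t^{\epsilon}}$, so that $S_t=\tilde S_t^{\epsilon} e^{Z_t^{\epsilon}}$ exhibits the model as a compound-Poisson process on top of the base process $\tilde S^{\epsilon}$. The proof of Theorem~\ref{ThmPut} is driven by a one-jump analysis: conditional on no large jump in $[0,T]$, the OTM put expectation is $e^{-O(1/T)}$ by the Varadhan short-time LDP for the local volatility diffusion; conditional on exactly one large jump at a uniform time $\tau$ with size $Y\sim P_\epsilon$, one has $A_T \to S_0\tau/T + S_0 e^{Y}(1-\tau/T)$ in probability as $T\to 0$; and the event of two or more jumps contributes $O(T^2)$. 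My plan is to verify that the same three estimates persist with $\tilde S^{\epsilon}$ in place of $\hat S$. Because the jumps of $X^{\epsilon}$ are bounded by $\epsilon$, this process has all exponential moments and its contribution to the short-time LDP rate function is a bounded perturbation of the diffusion rate function, so $\tilde S^{\epsilon}$ inherits the $e^{-O(1/T)}$ no-jump decay, and $\sup_{0\le t\le T}|\log(\tilde S_t^{\epsilon}/S_0)|\to 0$ in probability. Applying the compound-Poisson argument to $Z^\epsilon$ then yields
\begin{equation*}
\lim_{T\to 0}\frac{P^{\epsilon}(T)}{T}=\int_{0}^{K/S_0}\!\int_{-\infty}^{\log\!\left(\frac{K-S_0 t}{S_0(1-t)}\right)}\!(K-S_0 t-S_0 e^{y}(1-t))\,\mathbf{1}_{\{|y|>\epsilon\}}\,\nu(dy)\,dt,
\end{equation*}
where $P^{\epsilon}(T)$ is the OTM Asian put price in the truncated model.

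Finally I pass $\epsilon\downarrow 0$. Boundedness of the put payoff by $K$, together with a coupling between the full and truncated models that tracks the deterministic $O(T)$ drift shift induced by the truncated compensator, yields $|P(T)-P^{\epsilon}(T)|/T\to 0$ as $\epsilon\to 0$, uniformly in small $T$. On the right-hand side the integrand is non-negative and $\mathbf{1}_{\{|y|>\epsilon\}}\uparrow 1$ pointwise, so monotone convergence delivers the formula in the statement. The main technical obstacle is the uniform-in-$\epsilon$ short-time control of $\tilde S^{\epsilon}$: one must prevent many small jumps from cumulatively mimicking a large jump with probability larger than $o(T)$, and propagate the Varadhan-type estimate through the small-jump L\'evy perturbation of the local volatility diffusion. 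A minor bookkeeping issue is the $\epsilon$-dependence of the compensator $\mu=\psi(-i)$, whose truncation produces a harmless $O(T)$ drift that is negligible at the leading put-price order.
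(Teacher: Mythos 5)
Your plan and the paper's proof share the same big/small jump decomposition of the L\'evy process, but they diverge at the key step and your version has a genuine gap.

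The central problem is your claim that the small-jump component $X^\epsilon$ (drift plus compensated jumps bounded by $\epsilon$) contributes only a ``bounded perturbation of the diffusion rate function,'' so that the ``no large jump'' term inherits the $e^{-O(1/T)}$ Varadhan decay. This is not correct: a compensated jump martingale $M^\epsilon_t$ does not satisfy a short-time LDP at speed $1/T$. For a fixed threshold $c>0$, the probability $\mathbb{P}(\sup_{t\le T}|M^\epsilon_t|\ge c)$ decays only polynomially in $T$ (roughly of order $T^{c/\epsilon}$, since one needs about $c/\epsilon$ jumps), not exponentially. So the ``Varadhan-type estimate propagated through the small-jump perturbation'' that your argument relies on is false as stated. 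What is true — and is all one needs — is that this probability is $o(T)$ once $\epsilon$ is small enough; but you have to actually produce that estimate (it does not come from any LDP machinery), and the paper does so via the Burkholder--Davis--Gundy inequality: $\mathbb{P}(\sup_{t\le T}|M^\epsilon_t|\ge\delta)\le C_w\delta^{-w}\bigl(\int_{|x|\le\epsilon}x^2\nu(dx)\bigr)^{w/2}T^{w/2}$ with $w$ chosen large. The paper then never re-proves a one-jump analysis for a L\'evy-perturbed base process: it decomposes $C(T)$ (resp.\ $P(T)$) on the events $\{\sup|M^\epsilon|\ge\delta\}$ and $\{\sup|M^\epsilon|\le\delta\}$, shows the first is $o(T)$ by BDG plus H\"older, and on the second event sandwiches $e^{M^\epsilon_t}$ between $e^{-\delta}$ and $e^{\delta}$ so that Theorem~\ref{ThmI}/\ref{ThmPut} for the \emph{diffusion} base $\hat S$ with compound Poisson jumps $J^\epsilon$ applies directly, with $S_0$ replaced by $S_0e^{\pm\delta}$. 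Finally it sends $\delta\to 0$.

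Your final $\epsilon\downarrow 0$ passage is also unnecessary and, as described, problematic. The indicator $\mathbf{1}_{\{|y|>\epsilon\}}$ drops out for free: for the put, the inner $y$-integral runs over $(-\infty,\log\frac{K-S_0t}{S_0(1-t)}]$ and the upper endpoint is at most $\log(K/S_0)<0$ for all $t\in(0,K/S_0)$, so choosing $\epsilon<\log(S_0/K)$ once and for all makes the indicator identically $1$ on the domain of integration. No uniform-in-$T$ coupling argument between a truncated and a full model is needed. Moreover, as written, your ``truncated model'' (large jumps $Z^\epsilon$ on top of the base $\tilde S^\epsilon=\hat S e^{X^\epsilon}$) is simply the original model $S$, so it is unclear what $P^\epsilon(T)$ is meant to be; if you instead meant to drop the small jumps entirely and price with $\hat S e^{Z^\epsilon}$, then the claimed $|P(T)-P^\epsilon(T)|/T\to 0$ uniformly in small $T$ is not obvious and is not established. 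In short: the truncation idea is on the right track, but the $e^{-O(1/T)}$ claim for the L\'evy-perturbed diffusion is wrong, and the $\epsilon\to 0$ step should be replaced by a fixed small $\epsilon$ together with the BDG/H\"older concentration bound and a $\delta\to 0$ sandwich, as in the paper.
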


%%%%%%%%%%%%%%%%%%%%%%%%%%%%%%%%%%%%%%%%%%%%%%%%%%%%%%%%%%%%%%%%%%%%%%%%%%%%%%%%%%

Unlike the underlying model with compound Poisson jumps, 
we will not pursue the ATM asymptotics here. 
In the presence of L\'{e}vy jumps, it is known 
that the short maturity asymptotics
of ATM European options
can exhibit various
order dependence on $T$ depending on the L\'{e}vy measure \cite{MN2011}.
As a result, we do not anticipate that an unified result
similar to Theorem~\ref{ThmIII} holds for this case, and the ATM asymptotics
for Asian options in the presence of L\'{e}vy jumps
will be left as a future research direction.

\begin{remark}
Denoting $C_{E}$ and $P_{E}$ the prices of European call
and put options respectively, the leading order asymptotics with
the underlying asset price process \eqref{SLevy} are given by \cite{BL02}:
\begin{align}
&\lim_{T\rightarrow 0}\frac{C_{E}(T)}{T}
=\int_{-\infty}^{\infty}(S_{0}e^{y}-K)^{+}\nu(dy),
\qquad\text{for $S_{0}<K$},
\\
&\lim_{T\rightarrow 0}\frac{P_{E}(T)}{T}
=\int_{-\infty}^{\infty}(K-S_{0}e^{y})^{+}\nu(dy),
\qquad\text{for $S_{0}>K$}.
\end{align}
\end{remark}

\begin{remark}
We can link the asymptotics for the Asian options
and that of the European options with the underlying asset price process \eqref{SLevy}
as follows:
\begin{align}
&\lim_{T\rightarrow 0}\frac{C(T)}{C_{E}(T)}
=\frac{\int_{0}^{1}\int_{\log(\frac{K-S_{0}t}{S_{0}(1-t)})}^{\infty}\left(S_{0}t+S_{0}e^{y}(1-t)-K\right)\nu(dy)dt}
{\int_{\log(K/S_{0})}^{\infty}(S_{0}e^{y}-K)\nu(dy)},
\qquad\text{for $S_{0}<K$},
\\
&\lim_{T\rightarrow 0}\frac{P(T)}{P_{E}(T)}
=\frac{\int_{0}^{\frac{K}{S_{0}}}\int_{-\infty}^{\log(\frac{K-S_{0}t}{S_{0}(1-t)})}\left(K-S_{0}t-S_{0}e^{y}(1-t)\right)\nu(dy)dt}{\int_{-\infty}^{\log(K/S_{0})}(K-S_{0}e^{y})\nu(dy)},
\qquad\text{for $S_{0}>K$}.
\end{align}
\end{remark}

%%%%%%%%%%%%%%%%%%%%%%%%%%%%%%%%%%%%%%

\section{Floating Strike Asian Options}\label{sec:floating}

There are many variations of the standard Asian options in the finance literature
and one of the most used is the so-called floating strike Asian options.
The price of the floating strike Asian call/put options are given by
\begin{align}
&C_f(T):=e^{-rT}\mathbb{E}\left[\left(\kappa S_{T}-A_{T}\right)^{+}\right],
\\
&P_f(T):=e^{-rT}\mathbb{E}\left[\left(A_{T}-\kappa S_{T}\right)^{+}\right],
\end{align}
where $A_{T}$ is defined in \eqref{A:T:defn} and $\kappa>0$ is the strike, see e.g.
\cite{Alziary,HW,RogersShi}.
The floating-strike Asian option is more difficult to price than the fixed-strike case
because the joint law of $S_{T}$ and $A_{T}$ is needed.
When $\kappa<1$, the call option is OTM, the put option is ITM;
when $\kappa>1$, the call option is ITM, the put option is OTM;
when $\kappa=1$, the call/put options are ATM. We are interested
in the short maturity, i.e. $T\rightarrow 0$ asymptotics of these options.

We assume that asset price follows \eqref{SLevy}, which is a local volatility
model with L\'{e}vy jumps. Hence, we extend the short-maturity asymptotics
for Asian options in \cite{PZLV} that considers the local volatility model (without jumps).
For the Black-Scholes model, it was shown by Henderson and 
Wojakowski \cite{HW} that the floating-strike Asian options with continuous 
time averaging can be related to fixed strike ones. 
However, in our general setting of local volatility models with L\'{e}vy jumps, the equivalence
relations obtained in \cite{HW} do not hold, and hence the asymptotics for floating 
strike Asian options must be obtained independently from that of the fixed 
strike Asian options.

We have the following short-maturity asymptotics for the floating strike Asian call options.

\begin{theorem}\label{Thm:LevyC:floating}
The leading order asymptotics for out-of-the-money, i.e. $\kappa<1$, 
short maturity Asian floating strike call options under the model (\ref{SLevy})
with Assumption~\ref{assump2}, is given by
\begin{equation}
\lim_{T\rightarrow 0}\frac{C_{f}(T)}{T}
=\int_{1-\kappa}^{1}\int_{\log(\frac{t}{\kappa-(1-t)})}^{\infty}S_{0}\left(\kappa e^{y}-t-e^{y}(1-t)\right)\nu(dy)dt\,.
\end{equation}
\end{theorem}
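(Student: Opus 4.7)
The plan is to mirror the argument used for the fixed-strike L\'evy result (Theorem~\ref{Thm:LevyC}), adapting the payoff geometry to the floating-strike case. The governing intuition is that the probability of a single ``macroscopic'' jump of the L\'evy component over $[0,T]$ is $O(T)$, while the diffusion-only contribution decays like $e^{-c/T}$ by the large-deviations bound for the local-volatility model from \cite{PZLV}, and the probability of two or more such jumps is $O(T^2)$. Therefore the leading $O(T)$ contribution must come from the one-jump event, and the payoff on that event must be evaluated in the short-maturity limit.

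First I would use the L\'evy--It\^o decomposition at a truncation level $\varepsilon>0$ to write $X_t = b_\varepsilon t + M_t^{(\varepsilon)} + J_t^{(\varepsilon)}$, where $J^{(\varepsilon)}$ is a compound Poisson process of rate $\lambda_\varepsilon:=\nu(\{|y|>\varepsilon\})<\infty$ carrying the large jumps and $M^{(\varepsilon)}$ is the compensated small-jump martingale (whose exponential moments are controlled by Assumption~\ref{assump2}). Conditioning on the number of large jumps in $[0,T]$, I would split
\begin{equation*}
C_f(T) \;=\; C_f^{(0)}(T)+C_f^{(1)}(T)+C_f^{(\geq 2)}(T).
\end{equation*}
Since $\kappa<1$, on $\{N_T^{(\varepsilon)}=0\}$ the payoff can be positive only through rare diffusion excursions, and the Varadhan-type upper bound used in \cite{PZLV}, combined with a change of measure that removes the small-jump martingale, gives $C_f^{(0)}(T)=O(e^{-c/T})$. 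Cauchy--Schwarz together with $\mathbb{P}(N_T^{(\varepsilon)}\geq 2)=O(T^2)$ and the $L^p$ bound on $S_T$ coming from Assumption~\ref{assump2} yields $C_f^{(\geq 2)}(T)=O(T^2)$.

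The one-jump term then carries the leading order. On the event $\{N^{(\varepsilon)}_T=1\}$ the jump time $\tau$ is uniform on $[0,T]$ and the jump size $Y$ is distributed as $\nu_\varepsilon/\lambda_\varepsilon$, independently of $\tau$. As $T\to 0$ the Lipschitz--bounded volatility assumption \eqref{assumpI}--\eqref{assumpII} forces $\hat S_s\to S_0$ uniformly in $s\in[0,T]$ in $L^2$, and $M^{(\varepsilon)}_s$ vanishes in $L^2$ at rate $\sqrt{T}$, so, writing $t:=\tau/T\in[0,1]$, one obtains
\begin{equation*}
S_T \to S_0 e^Y,\qquad A_T \to S_0\bigl(t+(1-t)e^Y\bigr),
\end{equation*}
and the payoff converges to $S_0\bigl((\kappa-(1-t))e^Y-t\bigr)^+$. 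This is strictly positive precisely when $t>1-\kappa$ (so that the coefficient of $e^Y$ is positive) and $Y>\log\bigl(t/(\kappa-(1-t))\bigr)$. Integrating against the conditional law and using $\mathbb{P}(N^{(\varepsilon)}_T=1)=\lambda_\varepsilon T(1+o(1))$ then gives
\begin{equation*}
\lim_{T\to 0}\frac{C_f^{(1)}(T)}{T} \;=\; \int_{1-\kappa}^1\!\!\int_{\log(t/(\kappa-(1-t)))}^\infty S_0\bigl(\kappa e^y-t-(1-t)e^y\bigr)\,\nu_\varepsilon(dy)\,dt.
\end{equation*}

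Finally I would let $\varepsilon\to 0$. The integrand $S_0\bigl((\kappa-(1-t))e^y-t\bigr)^+$ vanishes linearly in $y$ near $y=0$ on the admissible region (the indicator cut-off passes through $y=0$ only at the boundary $t=1-\kappa$) and grows like $S_0\kappa e^y$ as $y\to+\infty$; both regimes are controlled by $\int(y^2\wedge 1)e^{\theta y}\nu(dy)<\infty$ from Assumption~\ref{assump2}, so monotone/dominated convergence yields the stated formula with $\nu$ in place of $\nu_\varepsilon$. The main obstacle in executing this plan is producing the uniform $L^p$ control on $S_T/S_0$ for some $p>1$ that is needed both to dominate $C_f^{(\geq 2)}(T)$ and to justify passing to the limit inside the one-jump conditional expectation; this is precisely where the margin $\theta>2$ in Assumption~\ref{assump2} is used, since it yields $\mathbb{E}[e^{\theta X_T}]<\infty$ uniformly for small $T$ and, combined with \eqref{assumpI}--\eqref{assumpII}, a uniform moment bound on $\sup_{s\leq T}S_s$.
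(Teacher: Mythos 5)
Your proposal follows the paper's own route: big/small L\'evy jump decomposition at a threshold $\varepsilon$, conditioning on zero, one, or two-or-more large jumps, with the diffusion-only term $O(e^{-c/T})$ from \cite{PZLV}, the multi-jump term $O(T^2)$, and the single-jump term carrying the $O(T)$ asymptotics -- exactly the outline the authors give (reducing to the arguments of Theorems~\ref{ThmI} and \ref{Thm:LevyC}). One small slip in your $\varepsilon\to 0$ discussion: the cutoff $\log\bigl(t/(\kappa-(1-t))\bigr)$ never reaches $y=0$ for $\kappa<1$; since $t/(\kappa-1+t)$ is decreasing on $(1-\kappa,1]$ with minimum $1/\kappa>1$, the cutoff satisfies $\log\bigl(t/(\kappa-(1-t))\bigr)\geq\log(1/\kappa)>0$ and blows up to $+\infty$ as $t\searrow 1-\kappa$, so the $y$-integration is in fact bounded away from the origin and the truncated measure $\nu_\varepsilon$ already coincides with $\nu$ on the domain once $\varepsilon<\log(1/\kappa)$ -- no dominated-convergence argument near $y=0$ is needed, and your conclusion stands.
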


Similarly, we have the following result for the floating strike Asian put options.

\begin{theorem}\label{Thm:LevyP:floating}
The leading order asymptotics for out-of-the-money, i.e. $\kappa>1$, 
short maturity Asian floating strike put options under the model (\ref{SLevy})
with Assumption~\ref{assump2}, is given by
\begin{equation}
\lim_{T\rightarrow 0}\frac{P_{f}(T)}{T}
=\int_{0}^{1}\int_{-\infty}^{\log(\frac{t}{\kappa-(1-t)})}S_{0}\left(t+e^{y}(1-t)-\kappa e^{y}\right)\nu(dy)dt\,.
\end{equation}
\end{theorem}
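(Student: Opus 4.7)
The plan is to adapt the single large-jump localization argument used in the proofs of Theorem~\ref{Thm:LevyP} (fixed-strike L\'evy put) and Theorem~\ref{Thm:LevyC:floating} (floating-strike L\'evy call) to the payoff $(A_T-\kappa S_T)^+$ with $\kappa>1$. The guiding intuition is that for $\kappa>1$ the option is out-of-the-money: the diffusion alone keeps $\hat{S}_t$ close to $S_0$, so $A_T-\kappa S_T\to-(\kappa-1)S_0<0$ as $T\to 0$, and the only mechanism that produces a positive payoff at order $T$ is a single, sufficiently large (negative) jump of $X_t$; zero-large-jump scenarios contribute at most $e^{-O(1/T)}$ and two-or-more large-jump scenarios contribute $O(T^2)$.

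First I would threshold the L\'evy jumps: fix $\epsilon>0$ and write $X_t=X_t^{(\epsilon)}+J_t^{(\epsilon)}$, where $J_t^{(\epsilon)}$ is the compound Poisson process gathering jumps of size $|y|>\epsilon$, with finite intensity $\lambda_\epsilon:=\nu(\{|y|>\epsilon\})$, and $X_t^{(\epsilon)}$ carries the drift, Brownian component, and compensated small jumps. Assumption~\ref{assump2} supplies the exponential moment $\mathbb{E}[e^{\theta X_t^{(\epsilon)}}]<\infty$ needed to apply Doob and BDG-type inequalities, giving $\sup_{0\le t\le T}|X_t^{(\epsilon)}|=o_p(1)$ as $T\to 0$, and also controls exponential moments of $\sup_{t\le T} S_t$ uniformly. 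The local volatility bounds \eqref{assumpI}--\eqref{assumpII} give $\sup_{0\le t\le T}|\hat{S}_t-S_0|=O_p(\sqrt{T})$, as in the estimates used in~\cite{PZLV}.

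Next I would partition by $N_T^{(\epsilon)}$, the number of $\epsilon$-large jumps. On $\{N_T^{(\epsilon)}=0\}$ both $A_T$ and $\kappa S_T$ stay within $o_p(1)$ of $S_0$ and $\kappa S_0$, so the payoff is non-zero only on a large-deviations set of probability $e^{-O(1/T)}$, contributing $o(T)$. On $\{N_T^{(\epsilon)}=1\}$ with jump time $\tau=tT$ (so that $t$ is asymptotically uniform on $[0,1]$) and jump size $Y$ drawn from $\nu(dy)/\lambda_\epsilon$ restricted to $\{|y|>\epsilon\}$, freezing $\hat{S}_\cdot\equiv S_0$ and neglecting $X^{(\epsilon)}$ gives
\begin{equation*}
\frac{A_T}{S_0}\longrightarrow t+(1-t)e^Y,\qquad \frac{\kappa S_T}{S_0}\longrightarrow \kappa e^Y,
\end{equation*}
so the contribution to $P_f(T)/T$ on this event is
\begin{equation*}
\int_0^1\!\!\int_{|y|>\epsilon} S_0\bigl(t+(1-t)e^y-\kappa e^y\bigr)^+\,\nu(dy)\,dt+o(1).
\end{equation*}
The positivity condition $t+(1-t)e^y-\kappa e^y>0$ rearranges to $y<\log(t/(\kappa-(1-t)))$, and for $\kappa>1$ the upper limit is always negative, reflecting that only negative jumps move a floating-strike put into the money. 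The $\{N_T^{(\epsilon)}\ge 2\}$ contribution is $O(T^2)$ because $\mathbb{P}(N_T^{(\epsilon)}\ge 2)=O(T^2)$ and the payoff has finite first moment by Assumption~\ref{assump2}. Passing $\epsilon\to 0$ by monotone convergence on the non-negative integrand yields the stated limit, noting that the integration region lies in $\{y<\log(1/\kappa)\}$, a set of finite $\nu$-measure.

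The main obstacle will be making the freezing argument quantitative, i.e., showing that replacing $\hat{S}_t e^{X_t^{(\epsilon)}}$ by $S_0$ in the payoff introduces an error $o(T)$ after taking expectations, uniformly over the scenarios carrying a single large jump. This step couples the $L^p$ estimates on $\sup_{t\le T}|\hat{S}_t-S_0|$ and on $\sup_{t\le T}|X_t^{(\epsilon)}|$ with the Lipschitz continuity of $(a,s)\mapsto(a-\kappa s)^+$ and with the exponential moment of the jump size provided by Assumption~\ref{assump2}; this parallels precisely the reductions already carried out for Theorems~\ref{Thm:LevyC}--\ref{Thm:LevyC:floating}, so the technical machinery transfers directly with only the obvious sign and payoff changes.
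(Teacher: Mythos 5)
Your proposal follows the same route as the paper: decompose the L\'evy process into $\epsilon$-truncated small-jump (martingale) and compound-Poisson large-jump parts, condition on the number of large jumps, show that zero large jumps contributes $e^{-O(1/T)}$ by the OTM large-deviations estimate, that two or more large jumps contributes $O(T^2)$ via H\"older and the exponential moment of Assumption~\ref{assump2}, reduce the single-jump event to the frozen-diffusion payoff $S_0(t+(1-t)e^y-\kappa e^y)^+$ with $t$ uniform, and let $\epsilon\to 0$ using the fact that the integration region $\{y<\log(1/\kappa)\}$ is bounded away from the origin. This is exactly the paper's argument (via Theorem~\ref{ThmI} and Theorems~\ref{Thm:LevyC}, \ref{Thm:LevyC:floating}), and your identification of the positivity region and the sign of the upper limit is correct.
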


In the special case of the local volatility model with compound Poisson jumps \eqref{SJD},
we have the following ATM asymptotics result which is an analogue of Theorem~\ref{ThmIII}.

\begin{theorem}\label{Thm:ATM:floating}
The leading order asymptotics for at-the-money, i.e. $\kappa=1$, short maturity Asian floating strike call and put options
under the model \eqref{SJD} with Assumption~\ref{Assump1} is given by
\begin{equation}
\lim_{T\rightarrow 0}\frac{C_{f}(T)}{\sqrt{T}}
=\lim_{T\rightarrow 0}\frac{P_{f}(T)}{\sqrt{T}}
=\frac{1}{\sqrt{6\pi}}\sigma(S_{0})S_{0}.
\end{equation}
\end{theorem}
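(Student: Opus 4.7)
The plan is to reduce the analysis at $\kappa = 1$ to the pure-diffusion local volatility part, exactly as in the proof strategy behind Theorem~\ref{ThmIII}, since a jump contributes with probability $O(T)$ and therefore only to an $O(T) = o(\sqrt T)$ piece of the price. First I would split on the event $\{N_T = 0\}$ and its complement. On $\{N_T \geq 1\}$ the bound
\[
\mathbb{E}\!\left[(S_T - A_T)^{+}\,\mathbf{1}_{\{N_T \geq 1\}}\right] \leq \mathbb{E}[|S_T - A_T|\,\mathbf{1}_{\{N_T \geq 1\}}]
\]
can be controlled by Cauchy--Schwarz together with $\mathbb{P}(N_T \geq 1) = 1 - e^{-\lambda T} = O(T)$ and a uniform $L^2$-bound on $|S_T - A_T|$ (which is $O(1)$ using Assumption~\ref{Assump1} to give finite moments of the jump contribution). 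This reduces the problem to computing the ATM asymptotics on $\{N_T = 0\}$, where $S_t$ coincides with the local-volatility diffusion $\hat S_t$ of \eqref{LV:SDE}.

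Next, on $\{N_T = 0\}$ I would use Fubini to rewrite
\[
\hat S_T - A_T \;=\; \frac{1}{T}\int_0^T (\hat S_T - \hat S_t)\, dt \;=\; \frac{1}{T}\int_0^T s\, d\hat S_s.
\]
Plugging in the SDE for $\hat S$, the drift piece contributes $\tfrac{1}{T}\int_0^T s(r-q-\lambda\mu)\hat S_s\,ds = O_{L^2}(T)$, and the martingale piece is $\tfrac{1}{T}\int_0^T s\,\sigma(\hat S_s)\hat S_s\, dW_s$. Writing $\sigma(\hat S_s)\hat S_s = \sigma(S_0)S_0 + R_s$ with $|R_s| \leq M S_0|\log\hat S_s - \log S_0|^\alpha + \overline{\sigma}|\hat S_s - S_0|$ (using the H\"older bound \eqref{assumpII} and boundedness of $\sigma$ in \eqref{assumpI}), a standard It\^o moment estimate gives $\mathbb{E}[R_s^2]^{1/2} = O(s^{\alpha/2} \wedge s^{1/2})$. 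Hence
\[
\hat S_T - A_T = \frac{\sigma(S_0)S_0}{T}\int_0^T s\, dW_s + \mathcal{E}_T,
\qquad \mathbb{E}[\mathcal{E}_T^2]^{1/2} = o(\sqrt T).
\]
The leading Gaussian term $Z_T := \tfrac{1}{T}\int_0^T s\, dW_s$ is centered with variance $\tfrac{1}{T^2}\int_0^T s^2\,ds = T/3$.

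The last step is to pass to the limit. Since $Z_T \sim \mathcal{N}(0,T/3)$, one has $\mathbb{E}[Z_T^{+}] = \sqrt{T/(6\pi)}$, so by the $L^1$-error bound above,
\[
\lim_{T\to 0}\frac{1}{\sqrt T}\,\mathbb{E}\!\left[(\hat S_T - A_T)^{+}\right] \;=\; \frac{\sigma(S_0)S_0}{\sqrt{6\pi}}.
\]
Combining with the discount factor $e^{-rT} = 1 + O(T)$ and the negligible jump contribution handled above yields the claim for $C_f(T)$. The corresponding statement for $P_f(T)$ follows instantly from $\mathbb{E}[Z_T^{+}] = \mathbb{E}[(-Z_T)^{+}]$ for the centered Gaussian leading term, together with the same $o(\sqrt T)$ error bookkeeping on $\mathcal{E}_T$.

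The main obstacle I anticipate is the uniform $L^2$-control of $R_s = \sigma(\hat S_s)\hat S_s - \sigma(S_0)S_0$ on $[0,T]$ so that the replacement by the pure Gaussian is justified at the $\sqrt T$ scale, and secondarily the $L^1$-control of the jump contribution on $\{N_T \geq 1\}$, where one needs Assumption~\ref{Assump1} (in particular a moment of order strictly greater than $2$ for $e^{Y_1}$) to bound $\mathbb{E}[|S_T - A_T|^2]$ uniformly in small $T$ despite the possibility of a large jump in $[0,T]$. Once these two error estimates are in place, the rest of the argument is a direct Gaussian moment computation.
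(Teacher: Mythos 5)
Your overall structure mirrors what the paper does: split on $\{N_T = 0\}$ versus $\{N_T \geq 1\}$, argue that the jump event contributes only $o(\sqrt T)$, and reduce to the ATM floating-strike asymptotics for the jump-free local volatility diffusion. However, the jump estimate as you have written it does not close. You propose bounding
\[
\mathbb{E}\!\left[|S_T - A_T|\,\mathbf{1}_{\{N_T \geq 1\}}\right]
\]
by Cauchy--Schwarz, combining a uniform $O(1)$ bound on $\mathbb{E}[|S_T - A_T|^2]^{1/2}$ with $\mathbb{P}(N_T \geq 1)^{1/2} = O(\sqrt T)$. That product is $O(\sqrt T)$, which is exactly the order of the leading diffusion term you are trying to isolate, so the jump contribution is not shown to be negligible. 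The fix is the one the paper uses in the proof of Theorem~\ref{ThmIII}: apply H\"older's inequality with exponents $p>2$ and $1 < p' < 2$, so that
\[
\mathbb{E}\!\left[|S_T - A_T|\,\mathbf{1}_{\{N_T \geq 1\}}\right]
\;\leq\; \left(\mathbb{E}\!\left[|S_T - A_T|^p\right]\right)^{1/p}\,\mathbb{P}(N_T \geq 1)^{1/p'} \;=\; O\!\left(T^{1/p'}\right),
\]
with $1/p' > 1/2$, hence $o(\sqrt T)$. This is precisely where Assumption~\ref{Assump1} (finite exponential moment of order $\theta>2$) is needed: it guarantees $\mathbb{E}[|S_T - A_T|^p] < \infty$ for some $p > 2$, not merely a finite $L^2$ moment. (Alternatively one can sharpen the $L^2$ bound itself to $\mathbb{E}[|S_T - A_T|^2]^{1/2} = O(\sqrt T)$, after which Cauchy--Schwarz yields $O(T)$; but with the $O(1)$ bound as stated the argument is insufficient.)

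Your treatment of the diffusion part is a genuine alternative to the paper's. The paper simply cites the ATM floating-strike short-maturity result for the jump-free local volatility model from \cite{PZLV}, whereas you give a self-contained derivation via the stochastic-Fubini identity $\hat S_T - \hat A_T = \tfrac{1}{T}\int_0^T s\, d\hat S_s$, isolate the leading centered Gaussian $\sigma(S_0)S_0\cdot\tfrac{1}{T}\int_0^T s\,dW_s \sim \sigma(S_0)S_0\cdot\mathcal N(0,T/3)$, and control the remainder through the H\"older regularity \eqref{assumpII} and boundedness \eqref{assumpI} of $\sigma$. This correctly yields $\tfrac{1}{\sqrt{6\pi}}\sigma(S_0)S_0$, and the put case follows by the symmetry of the centered Gaussian as you note. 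One small slip: the estimate $\mathbb{E}[R_s^2]^{1/2} = O(s^{\alpha/2}\wedge s^{1/2})$ should read $O(s^{(\alpha\wedge 1)/2})$, i.e.\ the larger of the two powers dominates for small $s$; this does not affect the conclusion since the error stays $o(\sqrt T)$ for any $\alpha>0$. The paper's route is shorter by invoking \cite{PZLV}, while yours is more explicit and makes transparent why the effective variance is $T/3$ rather than the $T$ appearing in the European ATM formula.
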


For the convenience of the reader, the results of this section and the previous section 
are summarized in Table~\ref{tab:payoffs}. 
The table shows the instruments considered, their payoff (i.e. call or put options, fixed or floating strikes)
and the leading $T\to 0$ asymptotics of their prices $V(T)$ (i.e. $V(T)=C(T)$ for fixed-strike call options,
$V(T)=P(T)$ for fixed-strike put options, $V(T)=C_{f}(T)$ for floating-strike call options, 
$V(T)=P_{f}(T)$ for floating-strike put options), the underlying asset price dynamics (i.e. with compound Poisson
or L\'{e}vy jumps), the regime (i.e. OTM or ATM), and the conditions that are needed (i.e. Assumption~\ref{Assump1} or Assumption~\ref{assump2}).

%%%%%%%%%%%%%%%%%%%%%%%%%%%%%%%%%%%%%%%%%%%%%%%%%%%%%%%%%%%%%%%%%%%%%%%%%%%%%%%%%%%%%%%%
\section{L\'{e}vy Jumps Models in Finance}\label{sec:examples}

In this section, let us consider several popular models in option pricing in finance
that involve L\'{e}vy jumps. In each example, we compute explicitly
the limit presented in Theorem~\ref{Thm:LevyC} and Theorem~\ref{Thm:LevyP}.

\subsection{Merton Jump Diffusion Model}

The Merton jump diffusion model \cite{Merton} is defined by 
\begin{equation}\label{merton:model}
S_t = S_0 e^{\sigma W_t + X_t + (r - q - \frac12 \sigma^2) t - \lambda \mu t}\,,
\end{equation}
where $X_t = \sum_{i=1}^{N_t} Y_i$ is a compound Poisson process, where $N_t$ is a Poisson process is intensity $\lambda>0$, and $Y_{i}$'s are i.i.d. normally distributed
jumps $Y_1 =\delta Z + \alpha$ with $Z=N(0,1)$ 
that corresponds to a jump density 
\begin{equation}\label{dPMerton}
dP(y) = \frac{1}{\sqrt{2\pi}\delta} e^{-\frac{1}{2\delta^2} (y-\alpha)^2}dy\,,
\end{equation}
independent of the Poisson process $N_{t}$ and the compensator $\mu$ is
\begin{equation}
\mu =\mathbb{E}[e^{Y_1}  - 1] = e^{\alpha + \frac12 \delta^2} - 1.
\end{equation}
We can rewrite \eqref{merton:model} as
$S_{t}=\hat{S}_{t}e^{X_t}$, 
with $\hat{S}_{t}=S_{0}e^{\sigma W_t + (r - q - \frac12 \sigma^2-\lambda\mu) t }$
so that the model \eqref{merton:model} is covered by \eqref{SLevy}.
We note that the Assumption~\ref{Assump1} is satisfied for any values of
the $\alpha,\delta$ parameters. 

\begin{proposition}\label{prop:Merton}
Under the Merton jump-diffusion model, we have the
leading order short maturity asymptotics for OTM Asian options.

i) For an OTM Asian call option $(K>S_0)$ we have
\begin{equation}
\lim_{T\to 0} \frac{C(T)}{T} = 
\lambda \int_0^1 \left[ (S_0 t-K) I_1(t) + S_0(1-t) I_2(t) \right] dt\,.
\end{equation}

ii) For an OTM Asian put option $(K<S_0)$ we have
\begin{equation}
\lim_{T\to 0} \frac{P(T)}{T} = 
\lambda \int_0^{K/S_0} \left[ (K - S_0 t) I_3(t) - S_0(1-t) I_4(t) \right] dt\,,
\end{equation}
where we denoted:
\begin{align}
&I_1(t) := \Phi\left( -\frac{1}{\delta} \log \left[ \frac{K-S_0 t}{S_0(1-t)} e^{-\alpha} \right] \right), \\
&I_2(t) := e^{\alpha+\frac12\delta^2} \Phi\left( -\frac{1}{\delta} \log \left[ \frac{K-S_0 t}{S_0(1-t)} e^{-\alpha-\delta^2} \right] \right),
\end{align}
and
\begin{align}
&I_3(t) := \Phi\left( \frac{1}{\delta} \log \left[ \frac{K-S_0 t}{S_0(1-t)} e^{-\alpha} \right] \right) 
 = 1 - I_1(t),\\
&I_4(t) := 
e^{\alpha+\frac12\delta^2} \Phi\left( \frac{1}{\delta} 
\log \left[ \frac{K-S_0 t}{S_0(1-t)} e^{-\alpha-\delta^2} \right] \right) =
e^{\alpha+\frac12\delta^2} - I_2(t)\,,
\end{align}
where $\Phi(\cdot )$ is the cumulative distribution function of $N(0,1)$.
\end{proposition}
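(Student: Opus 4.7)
The plan is to specialize Theorem~\ref{Thm:LevyC} (for the call) and Theorem~\ref{Thm:LevyP} (for the put) to the Gaussian jump density \eqref{dPMerton}. Since $Y_1 = \delta Z + \alpha$ with $Z \sim N(0,1)$ has moment generating function that is finite for every $\theta \in \mathbb{R}$, Assumption~\ref{Assump1} (and equivalently Assumption~\ref{assump2} for the atomic L\'evy measure $\lambda\, dP$) holds, so the abstract formulas of those theorems apply directly. The proof is then a direct evaluation of the resulting Gaussian integrals.

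It is convenient to abbreviate $a(t) := \log\!\bigl(\frac{K-S_0 t}{S_0(1-t)}\bigr)$ and split the inner integrand linearly. In the call case one writes
\begin{equation*}
S_0 t + S_0 e^{y}(1-t) - K = (S_0 t - K) + S_0(1-t)\, e^{y},
\end{equation*}
and correspondingly in the put case
\begin{equation*}
K - S_0 t - S_0 e^{y}(1-t) = (K - S_0 t) - S_0(1-t)\, e^{y}.
\end{equation*}
Two Gaussian tail integrals then arise. The first is standard: $\int_{a(t)}^\infty dP(y) = \Phi\!\bigl((\alpha - a(t))/\delta\bigr)$, which after rewriting is exactly $I_1(t)$. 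For the exponential moment one completes the square,
\begin{equation*}
y - \frac{(y-\alpha)^2}{2\delta^2} = \alpha + \tfrac{1}{2}\delta^2 - \frac{(y - \alpha - \delta^2)^2}{2\delta^2},
\end{equation*}
so that $\int_{a(t)}^\infty e^{y} dP(y) = e^{\alpha + \frac12 \delta^2}\,\Phi\!\bigl((\alpha + \delta^2 - a(t))/\delta\bigr) = I_2(t)$. Substituting these into the call formula of Theorem~\ref{Thm:LevyC} (multiplied by the jump intensity $\lambda$, since here $\nu(dy) = \lambda\, dP(y)$) yields the stated expression for $\lim_{T\to 0} C(T)/T$.

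For the put, the same two integrals occur but on $(-\infty, a(t)]$, which are the complements of the above on $\mathbb{R}$. Using $\Phi(-x)+\Phi(x) = 1$ and the fact that $\int_{\mathbb{R}} e^{y} dP(y) = e^{\alpha + \frac12\delta^2}$ gives the identities $I_3(t) = 1 - I_1(t)$ and $I_4(t) = e^{\alpha+\frac12\delta^2} - I_2(t)$, and Theorem~\ref{Thm:LevyP} then produces the stated put formula. Finally, one checks that the limits are strictly positive so the ``provided that the limit is positive'' hypothesis of the two theorems is satisfied: the integrands $(S_0 t + S_0 e^{y}(1-t) - K)$ and $(K - S_0 t - S_0 e^{y}(1-t))$ are strictly positive on the respective integration regions, and the normal density is strictly positive on all of $\mathbb{R}$.

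There is no genuine obstacle in this proof: the work is entirely in the closed-form Gaussian integrals of the constant and exponential moments truncated at $a(t)$. The only points worth attention are bookkeeping of the substitution $a(t) \leftrightarrow \log\bigl[\frac{K-S_0 t}{S_0(1-t)} e^{-\alpha}\bigr]$ (and similarly with the extra $e^{-\delta^2}$ factor for $I_2$), and ensuring that the compensator shift in $\hat S_t$ plays no role in the leading-order asymptotics, which is already guaranteed by the way Theorem~\ref{Thm:LevyC} and Theorem~\ref{Thm:LevyP} are formulated.
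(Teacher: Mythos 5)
Your proposal is correct and follows essentially the same route as the paper: specialize the abstract asymptotic formula to the Merton jump density and evaluate the resulting truncated Gaussian integrals (the tail probability and the exponential moment) in closed form, via completing the square. The only cosmetic difference is that you invoke Theorems~\ref{Thm:LevyC} and \ref{Thm:LevyP} with $\nu(dy)=\lambda\,dP(y)$, whereas the paper cites the compound Poisson versions, Theorems~\ref{ThmI} and \ref{ThmPut}; these coincide for the finite-activity Merton jump, so this makes no substantive difference.
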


We give next also the explicit results for the floating strike Asian options in the Merton model following from Theorems~\ref{Thm:LevyC:floating}
and \ref{Thm:LevyP:floating}.

\begin{proposition}\label{prop:floatMerton}
Under the Merton jump-diffusion model, we have the
leading order short maturity asymptotics for OTM floating strike Asian options.

i) For an OTM floating strike Asian call option $(\kappa < 1)$ we have
\begin{equation}
\lim_{T\to 0} \frac{C_f(T)}{T} = 
\lambda S_0 \int_{1-\kappa}^1 \left[ -t  I_{1,f}(t) + (\kappa - (1-t)) I_{2,f}(t) \right] dt\,.
\end{equation}

ii) For an OTM floating strike Asian put option $(\kappa>1)$ we have
\begin{equation}
\lim_{T\to 0} \frac{P_f(T)}{T} = 
\lambda S_0 \int_0^{1} \left[ t I_{3,f}(t) + (1-t-\kappa) I_{4,f}(t) \right] dt\,,
\end{equation}
where
\begin{align}
&I_{1,f}(t) := \Phi\left( \frac{- \log y_0(t) + \alpha}{\delta}  \right), \\
&I_{2,f}(t) := e^{\alpha+\frac12\delta^2} \Phi\left( \frac{-\log y_0(t) + \alpha + \delta}{\delta} \right),
\end{align}
and
\begin{align}
&I_{3,f}(t) := \Phi\left( \frac{\log y_0(t) - \alpha}{\delta} \right) 
 = 1 - I_{1,f}(t),\\
&I_{4,f}(t) := 
e^{\alpha+\frac12\delta^2} \Phi\left( \frac{\log y_0(t) - \alpha - \delta^2}{\delta} \right) =
e^{\alpha+\frac12\delta^2} - I_{2,f}(t)\,,
\end{align}
with $y_0(t) := \frac{t}{\kappa-(1-t)}$ and $\Phi(\cdot )$ is the cumulative distribution function of $N(0,1)$.
\end{proposition}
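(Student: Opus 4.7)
The plan is to apply Theorems~\ref{Thm:LevyC:floating} and \ref{Thm:LevyP:floating} directly, using the Merton L\'evy measure $\nu(dy) = \lambda \frac{1}{\sqrt{2\pi}\delta} e^{-(y-\alpha)^2/(2\delta^2)} dy$ read off from \eqref{dPMerton}. First I would check that Assumption~\ref{assump2} holds: since the jump density is Gaussian, $\int (y^2 \wedge 1) e^{\theta y} \nu(dy) < \infty$ for every $\theta > 0$, so the hypotheses of the theorems are met for all parameter values $\alpha, \delta$. Thus the limits of $C_f(T)/T$ and $P_f(T)/T$ are given by the double integrals in those theorems.

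The core calculation is then to evaluate the inner $y$-integral for the Merton density. I would split the integrand $S_0[(\kappa-(1-t))e^y - t]$ into a linear piece (coefficient $-t$) and an exponential piece (coefficient $\kappa-(1-t)$). For the linear piece, the inner integral reduces to a Gaussian tail,
\begin{equation*}
\int_{\log y_0(t)}^{\infty} \nu(dy) = \lambda \, \Phi\!\left( \frac{-\log y_0(t) + \alpha}{\delta} \right) = \lambda I_{1,f}(t),
\end{equation*}
with $y_0(t) = t/(\kappa-(1-t))$. For the exponential piece, I would complete the square: $y - \frac{(y-\alpha)^2}{2\delta^2} = -\frac{(y-\alpha-\delta^2)^2}{2\delta^2} + \alpha + \tfrac12 \delta^2$, which shifts the Gaussian mean to $\alpha+\delta^2$ and produces the overall factor $e^{\alpha + \frac12\delta^2}$, yielding
\begin{equation*}
\int_{\log y_0(t)}^{\infty} e^y \, \nu(dy) = \lambda e^{\alpha + \frac12 \delta^2} \, \Phi\!\left( \frac{-\log y_0(t) + \alpha + \delta^2}{\delta} \right) = \lambda I_{2,f}(t).
\end{equation*}
Putting these into Theorem~\ref{Thm:LevyC:floating} and factoring out $\lambda S_0$ gives part (i). For part (ii), I would repeat the same steps but with the integration range $(-\infty, \log y_0(t)]$, which flips the signs of the arguments of $\Phi$ and produces $I_{3,f}(t) = 1 - I_{1,f}(t)$ and $I_{4,f}(t) = e^{\alpha+\frac12\delta^2} - I_{2,f}(t)$; substituting into Theorem~\ref{Thm:LevyP:floating} and collecting terms yields the stated formula.

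There is no real obstacle: the entire proof is a reduction to two Gaussian tail integrals via completion of the square, and the mild bookkeeping needed is to keep track of the signs in the arguments of $\Phi$ when the one-sided integration domain flips between the call and put cases, and to verify that $y_0(t) > 0$ on the respective integration ranges in $t$ (for the call, $t \in [1-\kappa, 1]$ ensures $\kappa - (1-t) \geq 0$, so $\log y_0(t)$ is well-defined up to the boundary $t = 1-\kappa$ where it is $-\infty$ and contributes nothing; for the put with $\kappa > 1$, the denominator $\kappa - (1-t)$ is strictly positive for all $t \in [0,1]$).
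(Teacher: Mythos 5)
Your proposal is correct and follows essentially the same route as the paper: the paper's own proof is by reference, stating that it is analogous to Proposition~\ref{prop:Merton}, which in turn simply applies the general asymptotic theorem and evaluates the resulting Gaussian integral by completing the square, exactly as you do via Theorems~\ref{Thm:LevyC:floating} and \ref{Thm:LevyP:floating} with $\nu(dy)=\lambda\,dP(y)$. One remark worth making explicit: your computation produces $\Phi\bigl(\frac{-\log y_0(t)+\alpha+\delta^2}{\delta}\bigr)$ for the exponential piece, which is the correct expression and is consistent with the identity $I_{4,f}=e^{\alpha+\frac12\delta^2}-I_{2,f}$ and with $I_2$ in Proposition~\ref{prop:Merton}; the $+\delta$ appearing in the displayed definition of $I_{2,f}$ in the statement is a typographical slip for $+\delta^2$.
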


%%%%%%%%%%%%%%%%%%%%%%%%%%%%%%%%%%%
\subsection{Compound Poisson with Double Exponential Jumps}

We consider next the jump diffusion model (see e.g. Kou \cite{Kou}
and Kou and Wang \cite{KouWang}):
\begin{equation}\label{double:model}
S_{t}= \hat S_{t}e^{X_{t}},
\end{equation}
where $\hat S_t = S_0 e^{\sigma W_t + (r-q -\frac12\sigma^2-\lambda\mu)t}$ is a 
geometric Brownian motion and $X_{t}=\sum_{i=1}^{N_{t}}Y_{i}$ is a compound Poisson process
where the compound Poisson jumps 
arrive at rate $\lambda$, and the jump sizes follow the double exponential 
distribution with the probability density function:
\begin{equation}\label{doubleExp}
dP(y)=\alpha\eta_{1}e^{-\eta_{1}y}1_{y\geq 0}dy+(1-\alpha)\eta_{2}e^{\eta_{2}y}1_{y<0}dy,
\end{equation}
where $0\leq\alpha\leq 1$ and $\eta_{1}>1$, $\eta_2>0$.
The compensator is given by:
\begin{equation}
\mu = \mathbb{E}[e^{Y_1}]-1 = \alpha \frac{\eta_1}{\eta_1-1} + (1-\alpha) \frac{\eta_2}{\eta_2+1} - 1\,.
\end{equation}
The model \eqref{double:model} is covered by \eqref{SLevy}.

\begin{proposition}\label{prop:Kou}
We have the following short-maturity asymptotics for the OTM Asian options in the
Double Exponential jump model (\ref{double:model}).
Assume $\eta_{1}, \eta_{2} > 2$ which ensures that the  
Assumption~\ref{Assump1} is satisfied.

i) For OTM Asian call options, i.e. $K>S_{0}$, we have
\begin{equation}
\lim_{T\rightarrow 0}\frac{C(T)}{T}
=\frac{\lambda\alpha S_{0}}{\eta_{1}^{2}-1}\left(\frac{K}{S_{0}}\right)^{1-\eta_{1}}
{}_{2}F_{1}\left(1,\eta_{1}-1;\eta_{1}+2;\frac{S_{0}}{K}\right),
\end{equation}
where $_{2}F_{1}(a,b;c;z)$ is a hypergeometric function.

ii) For OTM Asian put options, i.e. $K<S_{0}$, we have
\begin{align}
\lim_{T\rightarrow 0}\frac{P(T)}{T}
&=\frac{\lambda(1-\alpha)S_0}{(\eta_{2}+1)(\eta_{2}+1)}\left(\frac{K}{S_{0}}\right)^{\eta_{2}+2}
{}_{2}F_{1}\left(1,\eta_{2};\eta_{2}+3;\frac{K}{S_{0}}\right).
\end{align}
\end{proposition}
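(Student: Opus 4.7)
The plan is to specialize Theorem~\ref{ThmI} and Theorem~\ref{ThmPut} to the double-exponential jump density~\eqref{doubleExp} and to recognize the resulting $t$-integrals as Euler representations of the Gauss hypergeometric function. First I would verify Assumption~\ref{Assump1}: the condition $\int e^{\theta y}\,dP(y)<\infty$ imposes $\theta<\eta_{1}$ from the positive tail while being automatic on the negative tail, so the hypothesis $\eta_{1}>2$ in the statement provides an admissible $\theta>2$ and the theorems apply.

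For the call ($K>S_{0}$), set $L(t):=\log\bigl(\tfrac{K-S_{0}t}{S_{0}(1-t)}\bigr)$. A direct computation gives $(K-S_{0}t)-S_{0}(1-t)=K-S_{0}>0$, so $L(t)>0$ on $[0,1)$. Consequently the inner $y$-integral in~\eqref{aC} runs over $y\ge L(t)>0$, a region on which~\eqref{doubleExp} reduces to $\alpha\eta_{1}e^{-\eta_{1}y}$ (the negative-jump part is invisible to the call, as it should be). The $y$-integral then splits into the two elementary pieces $\int_{L(t)}^{\infty}e^{-\eta_{1}y}\,dy$ and $\int_{L(t)}^{\infty}e^{(1-\eta_{1})y}\,dy$, where convergence of the second piece uses $\eta_{1}>1$. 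Combining the two terms via the defining identity $K-S_{0}t=S_{0}(1-t)e^{L(t)}$ induces a cancellation that collapses the result to a single monomial proportional to $(K-S_{0}t)^{1-\eta_{1}}(1-t)^{\eta_{1}}$.

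The remaining outer integral is cast into hypergeometric form through the classical Euler representation
\[
\int_{0}^{1} t^{b-1}(1-t)^{c-b-1}(1-zt)^{-a}\,dt=\frac{\Gamma(b)\Gamma(c-b)}{\Gamma(c)}\,{}_{2}F_{1}(a,b;c;z),
\]
applied with $b=1$, $c=\eta_{1}+2$, $a=\eta_{1}-1$, $z=S_{0}/K$. The $\Gamma$-factor simplifies to $1/(\eta_{1}+1)$, and multiplying by the $1/(\eta_{1}-1)$ produced by the $y$-integral yields the advertised coefficient $1/(\eta_{1}^{2}-1)$ together with the power $(K/S_{0})^{1-\eta_{1}}$ after extracting $K^{1-\eta_{1}}$ from the $t$-integrand; the symmetry ${}_{2}F_{1}(a,b;c;z)={}_{2}F_{1}(b,a;c;z)$ matches the order of arguments in the statement.

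The put case ($K<S_{0}$) follows the identical three-step template starting from~\eqref{aP}: on $t\in[0,K/S_{0})$ one verifies $L(t)<0$, so only the negative-jump part $(1-\alpha)\eta_{2}e^{\eta_{2}y}$ of~\eqref{doubleExp} is active; the $y$-integrations against $e^{\eta_{2}y}$ and $e^{(\eta_{2}+1)y}$ collapse by the same $e^{L(t)}$ cancellation to a term proportional to $(K-S_{0}t)^{\eta_{2}+1}(1-t)^{-\eta_{2}}$; and the substitution $u=S_{0}t/K$ puts the remaining $t$-integral into the same Euler form, now with $b=1$, $c=\eta_{2}+3$, $a=\eta_{2}$, $z=K/S_{0}$. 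The only genuine (but mild) obstacle is the algebraic step of grouping the two exponential pieces of the $y$-integral so the $e^{L(t)}$ factor cancels cleanly and the Euler parameters line up; no probabilistic or analytic input is required beyond what is already encapsulated in Theorems~\ref{ThmI} and~\ref{ThmPut}.
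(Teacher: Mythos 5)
Your proposal is correct and follows essentially the same route as the paper's proof: specialize Theorems~\ref{ThmI}/\ref{ThmPut} to the density~\eqref{doubleExp}, observe that the sign of $L(t)$ selects only the positive (resp.\ negative) exponential tail, collapse the two elementary $y$-integrals via $K-S_0t=S_0(1-t)e^{L(t)}$, and identify the remaining $t$-integral through the Euler integral representation of ${}_2F_1$. One small remark: carrying your put computation through (Beta factor $1/(\eta_2+2)$ from $b=1,\,c=\eta_2+3$, times the $1/(\eta_2+1)$ from the $y$-integral) produces $(\eta_2+1)(\eta_2+2)$ in the denominator, not the $(\eta_2+1)(\eta_2+1)$ printed in the proposition --- that is a typo in the paper's statement, not a gap in your argument.
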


\begin{remark}
The distribution (\ref{doubleExp}) is a particular case of the hyperexponential 
distribution (\ref{HEMdef})
from which it follows by taking $n=1$ and $p_1 =\alpha, q_1 = 1-\alpha$. 
We present here only the results with $n=1$, the generalization to arbitrary 
$n\in\mathbb{N}$ is immediate. 
\end{remark}

%%%%%%%%%%%%%%%%%%%%%%%%%%%%%%%%%%%%%%%%%%%%%%%%%%%%%%%%%%%%%%%%%%%%%%%%%%%%%%%%%%%%
\begin{remark}
When $p(y)$ is the probability density function of the double exponential 
distribution defined in \eqref{doubleExp}, we have the short maturity 
asymptotics for European options
\begin{equation}
\lim_{T\rightarrow 0}\frac{C_{E}(T)}{T}
=\frac{\lambda\alpha}{\eta_{1}-1}S_{0}\left(\frac{S_{0}}{K}\right)^{\eta_{1}-1},
\qquad\text{for $S_{0}<K$},
\end{equation}
and
\begin{equation}
\lim_{T\rightarrow 0}\frac{P_{E}(T)}{T}
=\frac{\lambda(1-\alpha)}{\eta_{2}+1}K\left(\frac{K}{S_{0}}\right)^{\eta_{2}},
\qquad\text{for $S_{0}>K$}.
\end{equation}
\end{remark}

%%%%%%%%%%%%%%%%%%%%%%%%%%%%%%%%%%%%%%%%%%%%%%%%%%%%%%%%%%%%%%%%%%%

\subsection{Variance Gamma Process}

The Variance Gamma (VG) process was introduced in 
financial modeling in \cite{Madan1990}, and its application to option pricing
was explored in \cite{Madan1998}. 
The VG process is defined as 
\begin{equation}\label{VGdef}
X^{\mathrm{VG}}_t = \sigma W_{\Gamma_t(\nu)} + \theta \Gamma_t(\nu)\,,
\end{equation}
where $W_t$ is a standard Brownian motion and the subordinator
$\Gamma_t(\nu)$ is a Gamma process with mean rate 1 and variance rate
$\nu$. Its increments $\Delta \Gamma_t(\nu) := \Gamma_{t+\Delta t}(\nu) - \Gamma_t(\nu)$ are Gamma distributed with
mean $\Delta t$ and variance $\nu \Delta t$.

The characteristic function of the VG process is 
\begin{equation}
\phi_{\mathrm{VG}}(u) = \mathbb{E}\left[e^{iu X_1^{\mathrm{VG}}}\right] = \left(
1 - i u \theta\nu + \frac12 \sigma^2 u^2\nu \right)^{-\frac{1}{\nu}} \,.
\end{equation}

The asset price in the VG model has the form
\begin{equation}\label{VGmodel}
S_t = S_0 e^{X_t^{\mathrm{VG}} + (r - q - \mu) t }\,,
\end{equation}
where the compensator $\mu$ is
\begin{equation}
\mu = \log \phi_{\mathrm{VG}}(-i) = - \frac{1}{\nu} \log \left(
1 - \left(\theta + \frac12 \sigma^2\right) \nu \right) \,.
\end{equation}
Since the VG process is a pure jump process, 
the asset price model \eqref{VGmodel} is covered by \eqref{SLevy}.

Following \cite{Carr2002} we introduce the notations
\begin{equation}
1 - i u \theta\nu + \frac12 \sigma^2 u^2\nu = (1 - i\eta_p u)(1 - i\eta_n u)\,,
\end{equation}
with
\begin{equation}
\eta_n :=  \sqrt{\frac14 \theta^2 \nu^2 + \frac12 \sigma^2 \nu } - \frac12\theta \nu\,,
\qquad
\eta_p :=  \sqrt{\frac14 \theta^2 \nu^2 + \frac12 \sigma^2 \nu } + \frac12\theta \nu\,.
\label{etap}
\end{equation}

The L\'evy density of the VG process is $\nu_{\mathrm{VG}}(dx) = \nu_{\mathrm{VG}}(x) dx$, with
\begin{equation}\label{VGmeasure}
\nu_{\mathrm{VG}}(x) = \frac{1}{\nu |x|} e^{-\frac{1}{\eta_n} |x| } 1_{x<0} + 
\frac{1}{\nu x} e^{- \frac{1}{\eta_p} x} 1_{x>0}\,.
\end{equation}

This is a special case of the L\'evy density for the CGMY model \cite{Carr2002} 
which has the form
\begin{equation}\label{CGMYmeasure}
\nu_{\mathrm{CGMY}}(x) = \frac{C}{| x|^{1+Y}} e^{-G |x|} 1_{x<0} + 
\frac{C}{x^{1+Y}} e^{-M x} 1_{x>0}\,,
\end{equation}
where $C,M,G,Y$ are real parameters. This reduces to (\ref{VGmeasure}) by taking
\begin{equation}\label{CGMY2VG}
C = \frac{1}{\nu} \,,\quad
G = \frac{1}{\eta_n}\,,\quad 
M = \frac{1}{\eta_p}\,,\quad 
Y = 0\,.
\end{equation}

\begin{lemma}\label{lem:VG}
Assumption~\ref{assump2} holds for the CGMY density (\ref{CGMYmeasure}) provided that $M>2$ and $Y<2$. For the VG model this reduces to the condition
\begin{equation}\label{VGcond}
2(\theta + \sigma^2) \nu < 1 \,.
\end{equation}
\end{lemma}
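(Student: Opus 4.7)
The plan is to reduce the claim to checking integrability of $(x^2\wedge 1)e^{\theta x}\nu_{\mathrm{CGMY}}(dx)$ near the origin and in each of the two tails, then specialize to the VG parameters.

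\textbf{Step 1: Near $x=0$.} On any bounded interval around $0$, $e^{\theta x}$ is bounded and $x^2 \wedge 1 = x^2$, so the relevant integrand behaves like $|x|^{2-(1+Y)} = |x|^{1-Y}$ on both sides of $0$. The integral $\int_{|x|\le 1} |x|^{1-Y} dx$ converges precisely when $1-Y > -1$, i.e., $Y < 2$. This gives the condition $Y<2$, and it is needed symmetrically on the positive and negative sides.

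\textbf{Step 2: Tails.} For the positive tail, $x^2\wedge 1 = 1$, and the integrand is $C\, x^{-1-Y} e^{-(M-\theta)x}$ on $[1,\infty)$. This converges iff $M>\theta$, so that some $\theta>2$ works iff $M>2$ (given $Y$ is already fixed by Step~1, $Y<2$ suffices there). For the negative tail, $e^{\theta x}$ with $\theta>0, x<0$ provides exponential decay, and in addition $e^{-G|x|}$ decays exponentially since $G=1/\eta_n>0$; hence that tail is automatically finite for every $\theta>0$, giving no further constraint. Combining, Assumption~\ref{assump2} for the CGMY density holds iff $M>2$ and $Y<2$.

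\textbf{Step 3: Specialization to VG.} Substitute $Y=0$ and $M=1/\eta_p$ from \eqref{CGMY2VG}. The condition $Y<2$ is automatic, so only $M>2$, i.e., $\eta_p < \frac12$, needs to be translated. Using \eqref{etap},
\begin{equation*}
\eta_p = \sqrt{\tfrac14\theta^2\nu^2 + \tfrac12\sigma^2\nu} + \tfrac12\theta\nu < \tfrac12
\end{equation*}
requires in particular $\tfrac12 - \tfrac12\theta\nu > 0$; squaring and simplifying cancels the $\tfrac14\theta^2\nu^2$ terms and yields $\tfrac12\sigma^2\nu + \tfrac12\theta\nu < \tfrac14$, i.e., $2(\theta+\sigma^2)\nu<1$, which is \eqref{VGcond}. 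Conversely, \eqref{VGcond} implies $\theta\nu<1$ and hence the squaring step is reversible, giving $\eta_p<\tfrac12$.

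\textbf{Main obstacle.} There is no serious obstacle: the whole argument is a decomposition-plus-asymptotics check. The only mild subtlety is making sure the squaring in Step~3 is done on a set where both sides are positive, which is taken care of by the constraint $\theta\nu<1$ extracted from the linear inequality $\tfrac12 - \tfrac12\theta\nu>0$ before squaring.
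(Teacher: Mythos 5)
Your proof is correct and follows essentially the same approach as the paper: decompose the integral from Assumption~\ref{assump2} into contributions near the origin and in the two tails, check finiteness of each piece, and then translate $M>2$, $Y=0$ into the VG parameter constraint. Your Step~3 spells out the algebra (squaring the inequality $\eta_p<\tfrac12$ after verifying $\theta\nu<1$) that the paper leaves implicit, which is a welcome addition.
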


We are now in a position to state the short-maturity asymptotics for the Asian options in the VG model. 

\begin{proposition}\label{prop:VG}
Assume that the asset price $S_t$ follows the VG model (\ref{VGmodel})
with parameters 
$(\theta, \nu,\sigma)$ which satisfy the condition (\ref{VGcond}).
For simplicity we use the CGMY notations for the VG parameters.
We have the following short-maturity asymptotics for OTM Asian options.

i) The short-maturity asymptotics for OTM Asian call option ($K>S_0$) is
\begin{equation}
\lim_{T\to 0} \frac{C(T)}{T} = \int_0^1 ( (S_0 t - K) J(t,M) + S_0(1-t) J(t,M-1))dt \,,
\end{equation}
with
\begin{equation}\label{Jres}
J(t,\alpha) := C \int_{\log (\frac{K/S_0-t}{1-t})}^\infty e^{-\alpha y} \frac{dy}{y} = 
- C \cdot \mathrm{li}\left( \left( \frac{S_0(1-t)}{K-S_0 t} \right)^\alpha \right)\,,\quad \alpha>0\,.
\end{equation}
The argument of the $\mathrm{li}(z)$ function takes values in $[0,1]$ for any $t\in [0,1]$.

ii)
The short-maturity asymptotics for an OTM Asian put option $(K < S_0)$ is

\begin{equation}
\lim_{T\to 0} \frac{P(T)}{T} = \int_0^{K/S_0}  ( (K - S_0 t ) \tilde J(t,G) - 
S_0(1-t) \tilde J(t,G+1))dt\,.
\end{equation}
with
\begin{equation}\label{Jtres}
\tilde J(t,\alpha) := C \int_{-\log (\frac{K/S_0-t}{1-t})}^\infty e^{-\alpha y} \frac{dy}{y} = 
- C \cdot \mathrm{li}\left( \left( \frac{K-S_0 t}{S_0(1-t)} \right)^\alpha \right)\,,\quad \alpha>0\,.
\end{equation}
The argument of the $\mathrm{li}(z)$ function takes values in $[0,1]$ for any
 $t\in [0,K/S_0]$. 
\end{proposition}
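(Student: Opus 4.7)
The plan is to apply Theorem~\ref{Thm:LevyC} for the call and Theorem~\ref{Thm:LevyP} for the put, with the L\'evy measure specialized to the VG density (\ref{VGmeasure}), and then recognize the resulting one-dimensional integrals in $y$ as logarithmic integrals. First I would verify that Assumption~\ref{assump2} is satisfied under the hypothesis (\ref{VGcond}): this is exactly the content of Lemma~\ref{lem:VG} with the CGMY-to-VG dictionary (\ref{CGMY2VG}), noting that $Y=0<2$ and that (\ref{VGcond}) is equivalent to $M = 1/\eta_p > 2$ via (\ref{etap}). Hence Theorems~\ref{Thm:LevyC} and \ref{Thm:LevyP} apply.

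For part (i), since $K>S_0$ the lower endpoint $\log\frac{K-S_0 t}{S_0(1-t)}$ is strictly positive for $t\in[0,1)$, so only the $y>0$ branch $\nu_{\mathrm{VG}}(y)=\frac{C}{y}e^{-My}$ of (\ref{VGmeasure}) contributes. I would split the integrand in Theorem~\ref{Thm:LevyC} as
\begin{equation*}
S_0 t + S_0 e^{y}(1-t) - K = (S_0 t - K) + S_0(1-t)\,e^{y},
\end{equation*}
so that the inner integral becomes $(S_0 t - K)J(t,M) + S_0(1-t)J(t,M-1)$ with $J(t,\alpha)$ defined by the first equality in (\ref{Jres}). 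Convergence of $J(t,M-1)$ requires $M-1>0$, which is ensured by $M>2$. For part (ii), the $y$-integration is over $y<\log\frac{K-S_0 t}{S_0(1-t)}<0$ for $t\in(0,K/S_0)$, so only the $y<0$ branch $\nu_{\mathrm{VG}}(y)=\frac{C}{|y|}e^{-G|y|}$ contributes; the substitution $y\mapsto -y$ turns the inner integral into $(K-S_0 t)\tilde J(t,G) - S_0(1-t)\tilde J(t,G+1)$, where $\tilde J$ is given by the first equality in (\ref{Jtres}).

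To convert the exponential integrals to logarithmic integrals, I would make the substitution $u = e^{-\alpha y}$, so $y = -\alpha^{-1}\log u$ and $dy/y = -du/(u\log u)$, mapping $[y_0,\infty)$ to $(0,e^{-\alpha y_0}]$. This yields
\begin{equation*}
\int_{y_0}^{\infty} \frac{e^{-\alpha y}}{y}\,dy = -\int_{0}^{e^{-\alpha y_0}} \frac{du}{\log u} = -\mathrm{li}\!\left(e^{-\alpha y_0}\right),
\end{equation*}
and substituting $y_0 = \log\frac{K-S_0 t}{S_0(1-t)}$ (resp.\ $-y_0$ for the put) produces the closed forms asserted in (\ref{Jres}) and (\ref{Jtres}). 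The requirement that the argument of $\mathrm{li}$ lies in $[0,1]$ is immediate from $K>S_0 t$ together with $\frac{S_0(1-t)}{K-S_0 t}\le 1$ for $t\in[0,1]$ in the call case, and symmetrically for the put. Assembling these pieces and interchanging the $t$- and $y$-integrals (justified by absolute integrability, since $M>2$ and $G$ analogously large enough, together with the explicit $t$-cutoff) gives the two claimed formulas.

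The main obstacle is not analytic but bookkeeping: one must make sure the sign of $y$ in the integration domain matches the branch of $\nu_{\mathrm{VG}}$ being used, and that the parameter shift $\alpha\mapsto \alpha-1$ (from the $e^{y}$ factor in the call payoff) or $\alpha\mapsto \alpha+1$ (from the $e^{y}$ factor in the put payoff) stays in the range where $J$ and $\tilde J$ converge. Once (\ref{VGcond}) is rephrased as $M>2$ via (\ref{etap}), these shifted parameters $M-1>1$ and $G+1>1$ are automatically admissible, so no extra hypothesis beyond (\ref{VGcond}) is needed.
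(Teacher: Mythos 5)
Your proof is correct and follows essentially the same route as the paper: invoke Theorems~\ref{Thm:LevyC} and \ref{Thm:LevyP} after checking Assumption~\ref{assump2} via Lemma~\ref{lem:VG}, split the integrand as you did, and change variables to reach the logarithmic integral (you compress the paper's two-step substitution $z=e^{y}$, $w=-\tfrac{1}{M}z^{-M}$ into the single cleaner substitution $u=e^{-\alpha y}$). One small sign slip in an intermediate line: with $u=e^{-\alpha y}$ one has $dy/y = +\,du/(u\log u)$, not $-\,du/(u\log u)$; the minus sign in the final identity $\int_{y_0}^\infty e^{-\alpha y}\,dy/y = -\mathrm{li}(e^{-\alpha y_0})$ arises from reversing the integration limits (as $y$ increases from $y_0$ to $\infty$, $u$ decreases from $e^{-\alpha y_0}$ to $0$), so the stated result is still correct.
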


The function $\texttt{li}(z)$ is the so-called logarithmic integral and is
defined as
\begin{equation}
\mathrm{li}(z) := \int_0^z \frac{dt}{\log t}\,.
\end{equation}
This is available for example in \textit{Mathematica} as \texttt{LogIntegral[z]}.

The function $\mathrm{li}(z)$ has the following properties: 
\begin{itemize}
\item[i)] $\lim_{z\to 0} \mathrm{li}(z) = 0$. 
\item[ii)] For $0 < z < 1$ it is a negative and decreasing function, approaching $-\infty$ as $z\to 1$.
\item[iii)] For $z > 1$ it is an increasing function; it vanishes at $z \simeq 1.4$, and then goes to infinity as $z\to \infty$.
\end{itemize}

The ATM limits of the short-maturity coefficients can be evaluated in closed form.
Denoting the two limits in Proposition~\ref{prop:VG} as $a_C^{\mathrm{VG}}(K/S_0), a_P^{\mathrm{VG}}(K/S_0)$, we have
\begin{equation}
\lim_{K/S_0\searrow 1} a_C^{\mathrm{VG}}(K/S_0) = C \cdot \mathrm{arctanh}\left(\frac{1}{2M-1}\right)\,,
\end{equation}
and 
\begin{equation}
\lim_{K/S_0\nearrow 1} a_P^{\mathrm{VG}}(K/S_0) = C \cdot \mathrm{arctanh}\left(\frac{1}{2G+1}\right)\,.
\end{equation}
However, we emphasize that these asymptotic results apply only for $K\neq S_0$ and do not extend to the ATM case $K=S_0$.

%%%%%%%%%%%%%%%%%%%%%%%%%%%%%%%%%%%%%%%%%

%%%%%%%%%%%%%%%%%%%%%%%%%%%%%%%%%%%%%%%%%%%%%%%%%%%%%%%%%%%%%%%%%%%%%%%%%%
\section{Analytical Approximations and Numerical Tests}\label{sec:numerical}

We propose in this section an analytical approximation for Asian options in jump-diffusion models, which is consistent with both the OTM and ATM short maturity asymptotics derived in this paper.
This analytical approximation will be tested by comparing with numerical simulations for a few commonly used jump-diffusion models.

i) For $K\geq S_0$, the Asian call option is represented
as the sum of a diffusive component and the jump component
\begin{equation}\label{appC}
C(K,T) = C_{\rm diff}(K,T; \Sigma_{\mathrm{LN}}(K/S_0)) + a_C(K) T\,.
\end{equation}
The put option is obtained from put-call parity
\begin{equation}\label{PCAsian}
C(K,T) - P(K,T) = e^{-rT}(A(T) - K) \,,
\end{equation}
where the forward of the average asset price is 
\begin{equation}\label{Afwddef}
A(T) = S_0 \frac{1}{(r-q)T} (e^{(r-q)T} - 1 ) \,.
\end{equation}

ii) For $K\leq S_0$, the Asian put option is represented in a similar way as\begin{equation}\label{appP}
P(K,T) = P_{\rm diff}(K,T) + a_P(K) T\,,
\end{equation}
and the Asian call put option is obtained from put-call parity (\ref{PCAsian}).

The coefficients $a_C(K),a_P(K)$ are defined as
\begin{equation}\label{JumpLim}
a_C(K) := \lim_{T\to 0} \frac{C(K,T)}{T} \,,\quad
a_P(K) := \lim_{T\to 0} \frac{P(K,T)}{T}\,,
\end{equation}
and are given in Proposition~\ref{prop:Merton} for the Merton model,
Proposition~\ref{prop:Kou} for the double exponential jump-diffusion model,
and in Proposition~\ref{prop:VG} for the VG model.

The diffusive component has the Black-Scholes form:
\begin{align}
& C_{\rm diff}(K,T) = e^{-r T}[A(T) \Phi(d_1) - K \Phi(d_2) ]\,, \\
& P_{\rm diff}(K,T) = e^{-r T}[K \Phi( - d_2) - A(T) \Phi(-d_1) ]\,,
\end{align}
where $A(T)$ is given in (\ref{Afwddef}) and 
$d_{1,2}=\frac{1}{\Sigma_{\rm LN}(K)\sqrt{T}}
\left( \log\frac{A(T)}{K} \pm \frac12 \Sigma^2_{\rm LN}(K) T \right)
$. The function $\Sigma_{\mathrm{LN}}(K)$ is the equivalent log-normal volatility of an Asian option in the local volatility model with volatility $\sigma(\cdot )$. This function was introduced and studied in \cite{PZLV}. 

The precise form of $\Sigma_{\rm LN}(K)$ depends on the local volatility function $\sigma(\cdot)$.
For the Black-Scholes model the function $\Sigma_{\mathrm{LN}}(K)$ depends only on moneyness $k=K/S_0$ and is given in closed form in Proposition~18 of \cite{PZLV}. 
The first few terms in its expansion around the ATM point are
\begin{equation}\label{SigLN}
\Sigma_{\mathrm{LN}}(K) = \frac{1}{\sqrt3} \sigma \left( 1 + \frac{1}{10} \log k -
\frac{23}{2100} \log^2 k + \frac{1}{3500} \log^3 k + O\left(\log^4 k\right)\right) \,.
\end{equation}
This series expansion is convenient for practical evaluations for $|\log(K/S_0)|$ not too large. 

It is easy to check that the approximation (\ref{appC}), (\ref{appP}) reproduces the short-maturity limit in Theorems~\ref{ThmI}, \ref{ThmPut} and \ref{ThmIII}, for both ATM and OTM Asian cases. For $K\neq S_0$, the contribution of the diffusive component cancels out in the ratios (\ref{JumpLim}), and the limit is given by the jump coefficients $a_{C,P}$. 
On the other hand, at the $K=S_0$ point the diffusive component is of $O(\sqrt{T})$ and dominates over the jump contribution which is $O(T)$. 

One peculiar feature of this approximation is the discontinuity of option prices across the $K=S_0$ point. This is due to the difference between $a_C(S_0)$ and $a_P(S_0)$. In general these constants are different. For example, in the Merton model with parameters (\ref{MJDparam}) we have, at $K=S_0$, 
\begin{equation}
a_C(S_0)/S_0 = 0.00215\,,\qquad a_P(S_0)/S_0 = 0.0269\,.
\end{equation}

The solution to this apparent puzzle is that the diffusive term dominates at the ATM point, such that the jump term is smaller by a factor of $\sqrt{T}$. Thus this discontinuity becomes parametrically small as $T\to 0$ and vanishes for sufficiently small maturity $T$.

%%%%%%%%%%%%%%%%%%%%%%%%%%
\subsection{Merton jump-diffusion model}

We start by considering the pricing of Asian options in the Merton jump-diffusion model. For the numerical tests we use the same model parameters as Fusai and Meucci \cite{Fusai2008} and Cai et al. \cite{CSK}:
\begin{equation}\label{MJDparam}
\sigma=0.126\,, \quad
\lambda = 0.175\,,\quad
\delta = 0.339\,,\quad
\alpha = -0.39\,.
\end{equation}
The interest rate and dividends rate are $r=q=0$. The spot price
is $S_0=1000$. 

The analytical approximations (\ref{appC}) (for $K\geq S_0$) and (\ref{appP}) (for $K\leq S_0$) for the Asian option prices in the Merton model are shown in 
Figure~\ref{Fig:Merton} as the solid blue curves. The diffusive component is shown as the dashed blue curve. 
The theoretical result is compared against a Monte Carlo (MC) simulation, shown as the red dots with error bars. The MC simulation uses $N_{\mathrm{MC}}=100k$ paths and a time discretization with $N=100$ time steps.
The two panels correspond to options with maturity $T=\frac{1}{52}$ (left) and $T = \frac{1}{12}$ (right).

For short maturity (1 week), the diffusive component is significant only in a small region around the ATM point. Away from this point, the jump contribution dominates the price of the Asian option. In this region the agreement of the MC simulation with the analytical approximation is very good. 

As maturity increases, the diffusive component increases and becomes significant in a wider range of strikes away from the ATM point. For maturities larger than 1 month the analytical approximation underestimates the correct Asian option price obtained from the MC simulation. The reason for this discrepancy may be due to neglected 
$O(T^2)$ contributions which could become important for larger maturity.

The right panel of Figure~\ref{Fig:Merton} shows also the discontinuity of the 
analytical approximation at the ATM point $k=1$, which increases with maturity. 
As discussed above, this is a feature of the analytical approximation adopted here, and disappears for sufficiently small maturity. 

A more detailed comparison is shown in Table~\ref{tab:MJD} which lists the numerical values for several points 
in the left panel of Figure~\ref{Fig:Merton}. 

\begin{figure}[htbp!]
\centering
\includegraphics[scale=0.5]{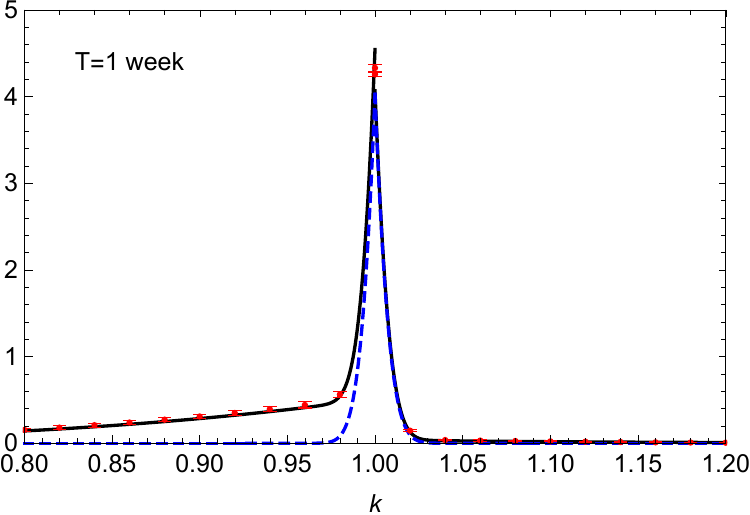}
\includegraphics[scale=0.5]{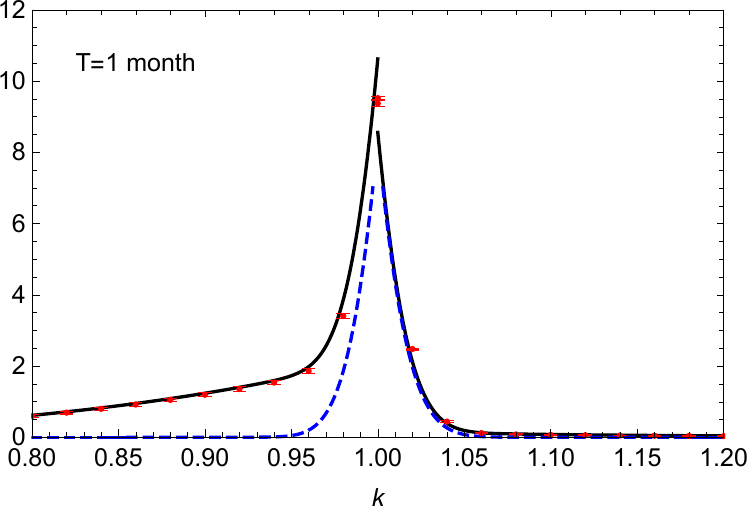}
\includegraphics[scale=0.75]{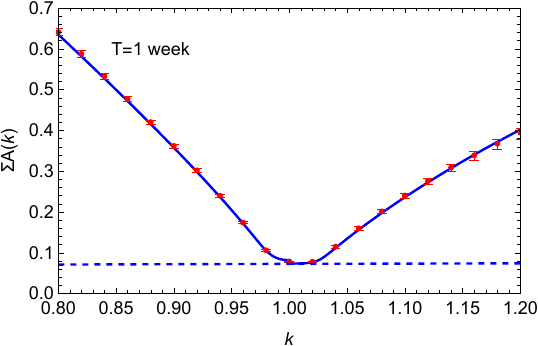}
\includegraphics[scale=0.75]{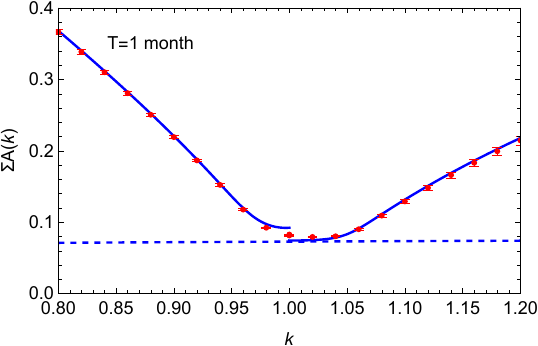}
\caption{Above: Asian option prices in the Merton model for maturity $T=\frac{1}{52}$ (1 week) (left) and $T=\frac{1}{12}$ (1 month) (right). The plots show the Asian put option for $k\leq 1$ and the Asian call options for $k\geq 1$ vs moneyness $k=K/S_0$. The solid black curve shows the analytical approximation (\ref{appC}), (\ref{appP}), and the dashed blue curve shows the diffusive component. The red dots with error bars show the results of a MC simulation described in text. 
Below: The Asian implied volatility $\Sigma_A(K,T)$ for the same maturities. }
\label{Fig:Merton}
\end{figure}

%%%%%%%%%%%%%%%%%%%%%%%%%%%%%%%%%%%%%%%%%%%%%%%%%%%%%%%%%%%%%%%%%%
In the lower plots of Figure~\ref{Fig:Merton} we compare also the implied volatilities $\Sigma_A(K,T)$ of the Asian options obtained from the analytical approximations (\ref{appC}), (\ref{appP}) with a MC simulation. The implied volatility of the Asian option $\Sigma_A(K,T)$ is defined as that volatility which reproduces the Asian option price when substituted into the Black-Scholes formula, with forward $A(T)$ defined as in (\ref{Afwddef}). In the absence of jumps, the short-maturity of the Asian option implied volatility becomes
$\lim_{\lambda \to 0} \Sigma_A(K,T) = \Sigma_{\mathrm{LN}}(K/S_0)$, with $\Sigma_{\mathrm{LN}}(K/S_0)$ given above in (\ref{SigLN}). 

The theoretical result for $\Sigma_A(K,T)$ following from the analytical approximation is shown the blue curves. The impact of the jumps is an increase in the Asian implied volatility compared to $\Sigma_{\mathrm{LN}}(K/S_0)$ shown as the dashed blue curve.
The theoretical approximation is discontinuous at the ATM point. This is due to the difference in $a_C(S_0),a_P(S_0)$ discussed above. This discontinuity vanishes as $T\to 0$, when the Asian implied volatility is dominated by the diffusive component.
The analytical approximation is in good agreement with the Monte Carlo simulation (red dots with error bars). 

%%%%%%%%%%%%%%%%%%%%%%%%%%%%%%%%%%%%%%%%%%%%%%%%%%%%%%%%%%%%%%%%%%

\begin{table}[htbp]
  \centering
  \caption{Asian call/put option prices with maturity 
$T=1/52$ (1 week) and parameters (\ref{MJDparam}) in the Merton jump diffusion model. 
Columns 2 and 3 show the results of a Monte Carlo simulation.
Columns 4 and 5 show the jump and diffusive components in (\ref{appC}) and (\ref{appP}) and column 6 shows their sum.}
    \begin{tabular}{c|cc|ccc}
    \hline
    $K$ & MC price & Std. Err.  & $a_{C,P}(K) T$ & $C_{BS}(K,T)$ & total \\
    \hline
  960 &  0.4413 & 0.0348 &  $a_P T=0.4112$  & 0.0001 & 0.4112 \\ 
  980 &  0.5623 & 0.0374 &  $a_P T=0.4617$  & 0.0831 & 0.5448 \\ 
 1000 &  4.3289 & 0.0434 &  $a_P T=0.5174$  &  4.0245 &  4.5419 \\
\hline
 1000 & 4.3289  & 0.0434 &  $a_C T=0.0413$  &  4.0245 & 4.0659 \\
 1020 & 0.1449  & 0.0096 &  $a_C T=0.0343$  &  0.0966 & 0.1309 \\
 1040 & 0.0369 & 0.0083 &  $a_C T=0.0289$  &  0.0001 & 0.0290 \\
    \hline
    \end{tabular}%
  \label{tab:MJD}%
\end{table}%

%%%%%%%%%%%%%%%%%%%%%%%%%%%%%%%%%%%%%%%%%%%%%%%%%%%%%%

%%%%%%%%%%%%%%%%%%%%%%%%%%%%%%%%%%%%%%%%%%%%%%%%%%%%%%%
\begin{table}[!ht]
\centering
\caption{Asian options pricing in the double exponential jump model with parameters (\ref{Kouparams}) including also a diffusive component with volatility $\sigma$. $S_0=1000,T=1/52$.}
\begin{tabular}{|ccccc||ccccc|}
\hline
$\sigma$ & $k$ & theory & MC price & Std Err 
    & $\sigma$ & $K$ & theory & MC price & Std Err \\ \hline\hline
 0 & 0.9 & 0.010 & 0.014 & 0.003 
    &  0.3 & 0.9 & 0.010 & 0.015 & 0.003 \\ 
 0 & 0.95 & 0.061 & 0.064 & 0.006 
    & 0.3 & 0.95 & 0.194 & 0.217 & 0.008 \\ 
 0 & 1 & 0.444/0.721 & 0.698 & 0.014 
    & 0.3 & 1 & 10.03/10.30 & 9.782 & 0.046 \\ 
 0 & 1.05 & 0.128 &  0.124 & 0.009 
    & 0.3 & 1.05 & 0.326 & 0.370 & 0.013 \\ 
 0 & 1.1 & 0.032 & 0.041 & 0.007 
    & 0.3 & 1.1 & 0.032 & 0.042 & 0.006 \\ 
 \hline
0.1 & 0.9 & 0.010 & 0.007 & 0.002 
      & 0.4 & 0.9 & 0.014 & 0.023 & 0.003 \\ 
0.1 & 0.95 & 0.061 & 0.062 & 0.005 
      & 0.4 & 0.95 & 0.766 & 0.782 & 0.015  \\ 
0.1 & 1 & 3.64/3.91 & 3.591 & 0.020 
      & 0.4 & 1 & 13.22/13.50 & 13.033 & 0.059 \\ 
0.1 & 1.05 & 0.173 & 0.125 & 0.009 
      & 0.4 & 1.05 & 1.059 & 1.066 & 0.019 \\ 
0.1 & 1.1 & 0.032 & 0.030 & 0.005 
      & 0.4 & 1.1 & 0.047 & 0.064 & 0.006 \\ 
\hline
0.2 & 0.9 & 0.010 & 0.009 & 0.002 
      & 0.5 & 0.9 & 0.056 & 0.062 & 0.004 \\ 
0.2 & 0.95 & 0.064 & 0.078 & 0.007 
      & 0.5 & 0.95 & 1.874 & 1.856 & 0.023 \\ 
0.2 & 1 & 6.83/7.11 & 6.721 & 0.032 
      & 0.5 & 1 & 16.41/16.69 & 16.113 & 0.073 \\ 
0.2 & 1.05 & 0.213 & 0.142 & 0.010 
      & 0.5 & 1.05 & 2.380 & 2.400 & 0.030 \\ 
0.2 & 1.1 & 0.032 & 0.030 & 0.005 
      & 0.5 & 1.1 & 0.162 & 0.185 & 0.009 \\ \hline
\end{tabular}
\label{tab:Kou}%
\end{table}
%%%%%%%%%%%%%%%%%%%%%%%%%%%%%%%%%%%%%%%%%%%%%%%%%%%%%%%%%%

\subsection{Double Exponential Jump model}

Next we consider the pricing of Asian options under the double
exponential model (\ref{doubleExp}) \cite{Kou}. 
We use similar model  parameters as in \cite{CaiKou}
\begin{equation}\label{Kouparams}
S_0=1000,\quad r=0.0,\quad \alpha=0.6,
\quad
q_1 =1-\alpha= 0.4, \quad\eta_1 = \eta_2 = 25.
\end{equation}
The intensity parameter is chosen as $\lambda=3$.
The parameters $\eta_1,\eta_2$ satisfy the condition $\eta_1,
\eta_2>2$ which is required by Assumption~\ref{Assump1}.

In Table~\ref{tab:Kou} we compare the analytical approximation for Asian option prices (\ref{appC}), (\ref{appP}) 
in the double exponential model with parameters (\ref{Kouparams}) and maturity $T=1/52$
against a Monte Carlo simulation of the model.
The model includes also a diffusive component with constant volatility $\sigma$ which is varied in the range
$\sigma=0.0- 0.5$.
The Monte Carlo simulation used the same parameters as for the Merton model ($N_{\mathrm{MC}}=100k$ paths and
$N=100$ time steps). 

As discussed, the analytical approximation has a two-fold ambiguity
at the ATM point, due to the discontinuity of the $a_C/a_P$ jump coefficients at this
point. The two values are shown in Table~\ref{tab:Kou}  as $P/C$.
For $\sigma=0$ the agreement between the theoretical approximation based on the short-maturity expansion
and the MC simulation is reasonably good for strikes sufficiently far away 
from the ATM point. 
As the volatility $\sigma$ increases, the diffusive component
starts to dominate the ATM Asian option price. Also the discontinuity of the analytical approximation decreases (in relative value), and the agreement with the MC simulation improves.

In Figure~\ref{Fig:Kou} we compare the implied volatilities of the Asian options 
$\Sigma_A(k,T)$ from the analytical approximation with those obtained from the MC simulation.
These plots assume a pure jump model $\sigma=0$, and the Asian option maturity is
$T=1/52$ (left plot) and $T=1/12$ (right plot).
The agreement is reasonably good for both cases, within the MC errors. 

\begin{figure}[t]
\centering
\includegraphics[width=2.7in]{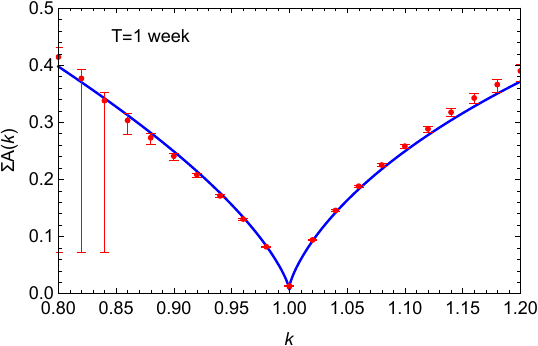}
\includegraphics[width=2.7in]{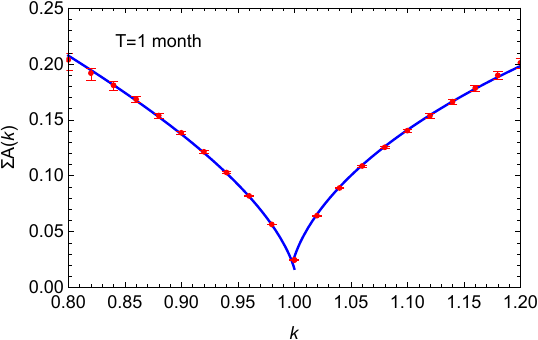}
\caption{
Implied volatility of Asian options $\Sigma_A(k,T)$ in the double exponential jump model with
maturity $T=1/52$ (1 week) (left) and $T=1/12$ (1 month) (right). The solid blue curve shows the 
short-maturity approximation and the red dots 
show the results of a MC simulation as described in text. 
Model parameters are as in (\ref{Kouparams}), and the diffusive volatility is $\sigma=0$.}
\label{Fig:Kou}
\end{figure}

%%%%%%%%%%%%%%%%%%%%%%%%%%%%%%%
\subsection{Variance Gamma model}

We present in this section test results for Asian options in the Variance
Gamma model. For this test we use the model parameters
\begin{equation}\label{VGparams}
\sigma = 0.4344\,,\quad
\nu = 0.1083\,,\quad
\theta = -0.3726\,,\quad
\eta = 0.0051\,,
\end{equation}
and $r=q=0$. 
These parameters were obtained in \cite{Carr2002} from a nonlinear square estimation from option prices on IBM stock with 1 month maturity.  They calibrated to 
an extended version of the model
of the form $S_t = \hat S_t e^{X_t^{\mathrm{VG}}-\mu t}$ with $\hat S_t = S_0 e^{\eta \hat W_t + (r - q - \frac12\eta^2)t}$ a geometric Brownian motion independent of the VG process, with volatility $\eta$. Expressed in CGMY notations using (\ref{CGMY2VG}) these parameters read
$M=12.062, G=8.113, C=9.234$. It is clear that the condition (\ref{VGcond}) is satisfied by these parameters. 

Table~\ref{tab:VG2} shows the short-maturity asymptotic prediction for the Asian call option in the VG model from Proposition~\ref{prop:VG},  comparing with MC simulations with several option maturities. The MC simulation used 100k paths and was performed in discrete time 
with $N=100$ time steps. The comparison is meaningful only away from the ATM point, as the short maturity result of Proposition~\ref{prop:VG} holds only for OTM options.
The agreement is good for maturities up to 1 week. 

In Figure~\ref{Fig:VG} we compare the implied volatilities of the Asian options from the short maturity asymptotic result of Proposition~\ref{prop:VG} with a Monte Carlo simulation, for maturities $T=1/52$ (left) and $T=1/12$ (right).
We observe again good agreement for the shorter maturity of 1 week, and larger discrepancies for the longer maturity of 1 month.

Extending the applicability of the short maturity expansion to longer maturities requires that the $O(T^2)$ is included as well. 
It is possible that these corrections can be computed using an extension of the method proposed in \cite{FLF} for European 
options in exponential L\'evy models.

\begin{table}[htbp]
  \centering
  \caption{Numerical results for Asian call options in the VG model with parameters (\ref{VGparams}). The second column is the short maturity asymptotic result from Proposition~\ref{prop:VG}, and 
columns 3-5 show $1000\frac{1}{T} C_A(T)$ for several maturities,
obtained by MC simulation as described in text. }
    \begin{tabular}{c|cccc}
    \hline
    $k=K/S_0$ & short maturity & $T=\frac{1}{252}$  & $T=\frac{1}{52}$  
                       & $T=\frac{1}{12}$\\
    \hline\hline
1.00 & 399.55 & $490.9\pm 7.8$ & $410.5 \pm 3.5$ & $286.5\pm 1.7$ \\
1.02 & 166.79 & $162.6\pm 6.2$ & $171.5\pm 2.9$ & $178.6\pm 1.5$ \\
1.04 & 96.93 & $94.8\pm 5.1$ & $103.3 \pm 2.4$ & $118.9\pm 1.3$ \\
1.06 & 61.53 & $59.7\pm 4.3$ & $67.2\pm 2.0$ & $82.2\pm 1.1$ \\
1.08 & 41.06 & $39.9\pm 3.7$ & $45.7\pm 1.7$ & $58.4\pm 1.0$ \\
1.10 & 28.36 & $27.8\pm 3.2$ & $31.9\pm 1.4$ & $42.4\pm 0.8$ \\
1.20 & 6.00 & $6.7\pm 1.7$ & $6.62\pm 0.7$ & $10.3\pm 0.4$ \\
    \hline
    \end{tabular}%
  \label{tab:VG2}%
\end{table}%

\begin{figure}[t]
\centering
\includegraphics[width=2.7in]{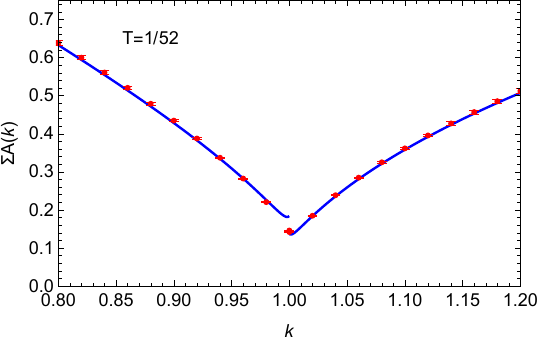}
\includegraphics[width=2.7in]{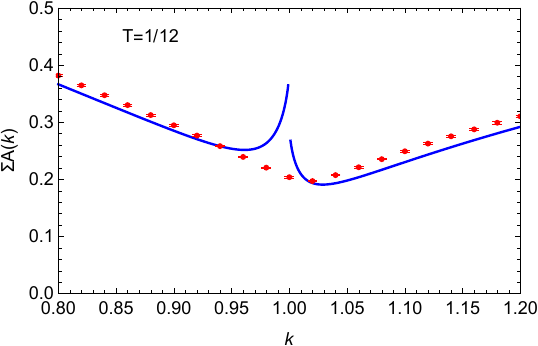}
\caption{
Implied volatility of Asian options in the Variance Gamma model.
The solid blue curves show the leading short-maturity asymptotic result of Proposition~\ref{prop:VG}, and the red dots show the results of the 
MC simulation described in text. 
The Asian option has maturity $T=1/52$ (left) and $T=1/12$ (right).
Model parameters are as in (\ref{VGparams}) and $S_0=1000$.}
\label{Fig:VG}
\end{figure}

%%%%%%%%%%%%%%%%%%%%%%%%%%%%%%%%%%%%%%%%%%%
\subsection{Numerical tests for floating strike Asian options}

We present in this section numerical tests for the short-maturity asymptotics of floating strike Asian options. For simplicity we restrict these tests to the Merton jump-diffusion
model for which the explicit result for the short-maturity asymptotics was presented in Proposition~\ref{prop:floatMerton}.

The floating strike Asian options are particular cases of the
so-called ``generalized Asian options'' with payoff $(\kappa S_T - A_T - K)^+$ (call) and $(A_T - \kappa S_T - K)^+$ (put), corresponding to $K=0$.
The analytical approximation proposed in this paper (see (\ref{appC}), (\ref{appP}))
is not appropriate for these options, as the underlying asset 
$B_T := \kappa S_T - A_T$ is not positive definite. Thus the Black-Scholes form 
of the diffusive component cannot be used.
A more appropriate treatment would use a Bachelier pricing formula for the 
diffusive component, similar to the approach proposed in Section~4.3 of \cite{AsianForwardStart} for the generalized Asian options in local volatility models.

We present here a direct test of the short-maturity asymptotic result for OTM floating strike Asian options. 
We compare in Table~\ref{tab:floatP} the asymptotic result of Proposition~\ref{prop:floatMerton} (second column) for $1000 \frac{1}{T} P_f(T)$ against
MC simulation results with several maturities from $T=\frac{1}{252}$ (one day)
to $T=\frac{1}{12}$ (1 month). 
For the test we use the same model parameters as in (\ref{MJDparam}). The
parameters of the MC simulation are the same as above. The MC results are
in good agreement with the OTM asymptotic prediction in the second column,
for values of $\kappa$ sufficiently different from 1. 

\begin{table}[htbp]
  \centering
  \caption{Numerical results for floating strike Asian put options in the Merton model with parameters (\ref{MJDparam}). The table shows $1000\frac{1}{T} P_f(\kappa,T)$ for several maturities. The second column gives the short maturity asymptotic result, and the remaining columns show MC simulation results.}
    \begin{tabular}{c|cccc}
    \hline
    $\kappa$ & short maturity & $T=\frac{1}{252}$  & $T=\frac{1}{52}$  
                    &  $T=\frac{1}{12}$ \\
    \hline\hline
    1.0 & 26.904 & $476.50\pm 4.86$ & $223.12\pm 2.19$ & $113.57\pm 1.03$ \\
1.02 & 25.033 & $26.19\pm 4.24$ & $29.43\pm 1.93$  & $43.90\pm 0.95$ \\
1.04 & 23.335 & $24.41 \pm 4.07$ & $23.34\pm 1.86$ & $25.52\pm 0.90$ \\
1.06 & 21.774 & $22.74\pm 3.90$ & $21.75\pm 1.79$ & $21.88\pm 0.87$ \\
1.08 & 20.332 & $21.18\pm 3.74$ & $20.32\pm 1.72$ & $20.35\pm 0.84$ \\
1.10 & 18.995 & $19.75\pm 3.58$ & $18.98\pm 1.66$ & $19.07\pm 0.81$ \\
\hline
1.12 & 17.754 & $18.36\pm 3.43$ & $17.74\pm 1.60$ & $17.87\pm 0.78$ \\
1.14 & 16.598 & $17.05\pm 3.29$ & $16.58\pm 1.54$ & $16.76\pm 0.75$ \\
1.16 & 15.523 & $15.90\pm 3.16$ & $15.49\pm 1.49$ & $15.73\pm 0.72$ \\
1.18 & 14.520 & $14.83\pm 3.02$ & $14.50\pm 1.43$ & $14.76\pm 0.70$ \\
1.20 & 13.584 & $13.77 \pm 2.90$ & $13.60\pm 1.38$ & $13.87\pm 0.67$ \\
    \hline
    \end{tabular}%
  \label{tab:floatP}%
\end{table}%

%%%%%%%%%%%%%%%%%%%%%%%%%%%%%%%%%%%%%
\section*{Acknowledgements}
Lingjiong Zhu is partially supported by the grants NSF DMS-2053454, NSF DMS-2208303.
%%%%%%%%%%%%%%%%%%%%%%%%%%%%%%%%%%%%%%%%%%%%%%%%%%%%%

\appendix

\section{Technical Proofs}\label{sec:proofs}

%%%%%%%%%%%%%%%%%%%%%%%%%%%%%%%%%%%%%%%%%%%%%%%%%%%%%%%%%%%%%%%%%%%%%%%%%%%%%%%%
\begin{proof}[Proof of Theorem~\ref{ThmI}]
Note that
\begin{align}\label{ThreeTerms}
C(T)&=e^{-rT}\mathbb{E}\left[\left(A(T)-K\right)^{+}1_{N_{T}=0}\right]
+e^{-rT}\mathbb{E}\left[\left(A(T)-K\right)^{+}1_{N_{T}=1}\right]
\\
&\qquad
+e^{-rT}\mathbb{E}\left[\left(A(T)-K\right)^{+}1_{N_{T}\geq 2}\right],
\nonumber
\end{align}
where $A(T)$ is defined in \eqref{A:T:defn}.

Next, let us analyze each of the three terms in \eqref{ThreeTerms} carefully.

Let us first consider the first term in \eqref{ThreeTerms}.
The probability that there is no Poisson jump is given by $\mathbb{P}(N_{T}=0)=e^{-\lambda T}$.
Conditional on zero jumps on the interval $[0,T]$, $S_{t}=\hat{S}_{t}$ process satisfies
the local volatility model:
\begin{equation}
\hat{S}_{t}=S_{0}e^{\int_{0}^{t}\sigma(\hat{S}_{s})dW_{s}
+(r-q-\lambda\mu)t
-\int_{0}^{t}\frac{1}{2}\sigma^{2}(\hat{S}_{s})ds}.
\end{equation}
By using the OTM Asian call option result from Pirjol and Zhu \cite{PZLV}, we get
\begin{align}
e^{-rT}\mathbb{E}\left[\left(A_{T}-K\right)^{+}1_{N_{T}=0}\right]
&=e^{-rT}e^{-\lambda T}\mathbb{E}\left[\left(\hat{A}_{T}-K\right)^{+}\right]
\\
&=e^{-rT}e^{-\lambda T}e^{-\mathcal{I}(K,S_{0})\frac{1}{T}+o(\frac{1}{T})},\nonumber
\end{align}
as $T\rightarrow 0$, where $A(T)$ is defined in \eqref{A:T:defn} and 
\begin{equation}\label{hat:A:T:defn}
\hat{A}_{T}:=\frac{1}{T}\int_{0}^{T}\hat{S}_{t}dt.    
\end{equation}

Next, let us consider the third term in \eqref{ThreeTerms}.
By H\"{o}lder's inequality, for any $\frac{1}{p}+\frac{1}{p'}=1$, where $p,p'>1$, we have
\begin{align}\label{17}
&e^{-rT}\mathbb{E}\left[\left(A_{T}-K\right)^{+}1_{N_{T}\geq 2}\right]
\\
&\leq 
e^{-rT}\mathbb{E}\left[A_{T}1_{N_{T}\geq 2}\right]
+e^{-rT}\mathbb{E}\left[K1_{N_{T}\geq 2}\right]
\nonumber
\\
&\leq e^{-rT}\left(\mathbb{E}\left[A_{T}^{p}\right]\right)^{1/p}
\left(\mathbb{E}[(1_{N_{T}\geq 2})^{p'}]\right)^{1/p'}
+e^{-rT}K\mathbb{P}(N_{T}\geq 2)
\nonumber
\\
&\leq e^{-rT}C\cdot O(T^{2/p'})+e^{-rT}K\cdot O(T^{2}).
\nonumber
\end{align}
Let us choose $1<p'<2$, which is ensured by taking $p$ to be some value greater than $2$, which is made possible by Assumption~\ref{Assump1}. Then, the third term in \eqref{ThreeTerms} is bounded above by $O(T^{2/p'})$, with $2/p' > 1$. 
Thus we proved that the third term in \eqref{ThreeTerms} is $o(T)$.

We give next the proof of the estimate in the last line of equation (\ref{17}).
First, note that
$\mathbb{P}(N_{T}\geq 2)=1-e^{-\lambda T}-\lambda Te^{-\lambda T}=O(T^{2})$
as $T\rightarrow 0$.
Moreover, note that 
$\frac{S_{t}}{e^{(r-q)t}}=:M_{t}$ is a positive martingale. Therefore by the Doob's martingale inequality,
\begin{align}
\left(\mathbb{E}\left[A_{T}^{p}\right]\right)^{1/p}
\leq\left(\mathbb{E}\left[\max_{0\leq t\leq T}S_{t}^{p}\right]\right)^{1/p}
\leq e^{|r-q|T}\left(\mathbb{E}\left[\max_{0\leq t\leq T}M_{t}^{p}\right]\right)^{1/p}
\leq e^{|r-q|T}C_{p}\left(\mathbb{E}\left[M_{T}^{p}\right]\right)^{1/p}\,,
\end{align}
with $C_p = \frac{p}{p-1}$. Furthermore, we have
\begin{align}
\left(\mathbb{E}\left[M_{T}^{p}\right]\right)^{1/p}
&=e^{-(r-q)T}\left(\mathbb{E}\left[S_{T}^{p}\right]\right)^{1/p}
\nonumber
\\
&=e^{-(r-q)T}\left(\mathbb{E}\left[S_{0}^{p}e^{p\int_{0}^{T}\sigma(\hat{S}_{s})dW_{s}
+p(r-q-\lambda\mu)T
-p\int_{0}^{T}\frac{1}{2}\sigma^{2}(\hat{S}_{s})ds
+p\sum_{i=1}^{N_{T}}Y_{i}}\right]\right)^{1/p}
\nonumber
\\
&= e^{-(r-q)T}S_{0}e^{(r-q-\lambda\mu)T}
\nonumber
\\
&\qquad\cdot
\left(\mathbb{E}\left[S_{0}^{p}e^{p\int_{0}^{T}\sigma(\hat{S}_{s})dW_{s}
-\frac{\alpha}{2}p^{2}\int_{0}^{T}\sigma^{2}(\hat{S}_{s})ds+(\frac{\alpha}{2}p^{2}-\frac{1}{2}p)\int_{0}^{T}\sigma^{2}(\hat{S}_{s})ds
+p\sum_{i=1}^{N_{T}}Y_{i}}\right]\right)^{1/p}
\nonumber
\\
&\leq e^{-(r-q)T}S_{0}e^{(r-q-\lambda\mu)T}
\left(\mathbb{E}\left[S_{0}^{p}e^{\alpha p\int_{0}^{T}\sigma(\hat{S}_{s})dW_{s}
-\frac{1}{2}(\alpha p)^{2}\int_{0}^{T}\sigma^{2}(\hat{S}_{s})ds}\right]\right)^{\frac{1}{p\alpha}}
\nonumber
\\
&\qquad\qquad\qquad\cdot
\left(\mathbb{E}\left[e^{(\frac{\beta\alpha}{2}p^{2}-\frac{\beta}{2}p)\int_{0}^{T}\sigma^{2}(\hat{S}_{s})ds
+\beta p\sum_{i=1}^{N_{T}}Y_{i}}\right]\right)^{\frac{1}{p\beta}}
\nonumber
\\
&\leq S_{0}e^{-\lambda\mu T}e^{(\frac{\alpha}{2}p-\frac{1}{2})T(\overline{\sigma})^{2}}
\left(\mathbb{E}\left[e^{\beta p\sum_{i=1}^{N_{T}}Y_{i}}\right]\right)^{\frac{1}{p\beta}}
\nonumber
\\
&= S_{0}e^{-\lambda\mu T}e^{(\frac{\alpha}{2}p-\frac{1}{2})T(\overline{\sigma})^{2}}
e^{\frac{\lambda T}{p\beta}(\mathbb{E}[e^{p\beta Y_{1}}]-1)},
\nonumber
\end{align}
where we have applied H\"{o}lder's inequality with $\alpha,\beta>1$ and $\frac{1}{\alpha}+\frac{1}{\beta}=1$.
This is uniformly bounded as $T\rightarrow 0$, 
since under our Assumption~\ref{Assump1}, we have $\mathbb{E}[e^{p\beta Y_{1}}]<\infty$ since $p>2$ can be chosen to be close to $2$ and $\beta>1$ be chosen to be close to $1$.
In the last two lines, we used the inequality $\frac{1}{T} \int_0^T \sigma^2(\hat S_s) ds\leq(\overline{\sigma})^{2}$ which follows from the technical assumption \eqref{assumpI}.

Finally, let us analyze the second term in \eqref{ThreeTerms}.
Note that $\mathbb{P}(N_{T}=1)=e^{-\lambda T}\lambda T$.
Conditional on $N_{T}=1$, the occurrence time
of the jump is uniformly distributed on $[0,T]$.
Therefore, we have
\begin{align}
&e^{-rT}\mathbb{E}\left[\left(A_{T}-K\right)^{+}1_{N_{T}=1}\right]
\\
&=e^{-rT}e^{-\lambda T}\lambda\int_{-\infty}^{\infty}\int_{0}^{T}
\mathbb{E}\left[\left(\frac{1}{T}\int_{0}^{T}\hat{S}_{s}e^{y1_{s\geq t}}ds-K\right)^{+}\right]
dtdP(y).
\nonumber
\end{align}

By changing the variable in the integration, we get
\begin{align}
&e^{-rT}\mathbb{E}\left[\left(A_{T}-K\right)^{+}1_{N_{T}=1}\right]
\\
&=e^{-rT}e^{-\lambda T}\lambda T\int_{-\infty}^{\infty}\int_{0}^{1}
\mathbb{E}\left[\left(\int_{0}^{1}\hat{S}_{sT}e^{y1_{s\geq t}}ds-K\right)^{+}\right]
dtdP(y).
\nonumber
\end{align}
Next, let us prove that
\begin{align}\label{ClaimI}
&\lim_{T\rightarrow 0}
\int_{-\infty}^{\infty}\int_{0}^{1}
\mathbb{E}\left[\left(\int_{0}^{1}\hat{S}_{sT}e^{y1_{s\geq t}}ds-K\right)^{+}\right]
dtdP(y)
\\
&=\int_{-\infty}^{\infty}\int_{0}^{1}
\left(\int_{0}^{1}S_{0}e^{y1_{s\geq t}}ds-K\right)^{+}dtdP(y).
\nonumber
\end{align}
We can use the put-call parity and write:
\begin{align}\label{TwoTerms2}
&\int_{-\infty}^{\infty}\int_{0}^{1}
\mathbb{E}\left[\left(\int_{0}^{1}\hat{S}_{sT}e^{y1_{s\geq t}}ds-K\right)^{+}\right]
dtdP(y)
\\
&=\int_{-\infty}^{\infty}\int_{0}^{1}
\mathbb{E}\left[\left(K-\int_{0}^{1}\hat{S}_{sT}e^{y1_{s\geq t}}ds\right)^{+}\right]
dtdP(y)
\nonumber
\\
&\qquad\qquad\qquad
+\int_{-\infty}^{\infty}\int_{0}^{1}
\mathbb{E}\left[\left(\int_{0}^{1}\hat{S}_{sT}e^{y1_{s\geq t}}ds-K\right)\right]
dtdP(y).
\nonumber
\end{align}
For any $0<s<1$, $0<t<1$ and $y\in\mathbb{R}$, we have $\hat{S}_{sT}\rightarrow S_{0}$
a.s. as $T\rightarrow 0$ since this process is continuous a.s. Therefore, by bounded convergence theorem,
for the first term in \eqref{TwoTerms2}, we have
\begin{align}
&\lim_{T\rightarrow 0}\int_{-\infty}^{\infty}\int_{0}^{1}
\mathbb{E}\left[\left(K-\int_{0}^{1}\hat{S}_{sT}e^{y1_{s\geq t}}ds\right)^{+}\right]dtdP(y)
\\
&=\int_{-\infty}^{\infty}\int_{0}^{1}
\mathbb{E}\left[\left(K-\int_{0}^{1}S_{0}e^{y1_{s\geq t}}ds\right)^{+}\right]dtdP(y).
\nonumber
\end{align}
For the second term in \eqref{TwoTerms2}, we can compute that
\begin{align}
&\int_{-\infty}^{\infty}\int_{0}^{1}
\mathbb{E}\left[\left(\int_{0}^{1}\hat{S}_{sT}e^{y1_{s\geq t}}ds-K\right)\right]dtdP(y)
\\
&=\int_{-\infty}^{\infty}\int_{0}^{1}
\left[\left(\int_{0}^{1}\mathbb{E}[\hat{S}_{sT}]e^{y1_{s\geq t}}ds-K\right)\right]dtdP(y).
\nonumber
\end{align}
For any $0<t<1$, by the definition of $\hat{S}_{sT}$,
we can compute that for any $0<s<1$, 
$\mathbb{E}[\hat{S}_{sT}]=S_{0}e^{(r-q-\lambda\mu)sT}$.
Hence,
\begin{align}
&\int_{-\infty}^{\infty}\int_{0}^{1}
\mathbb{E}\left[\left(\int_{0}^{1}\hat{S}_{sT}e^{y1_{s\geq t}}ds-K\right)\right]dtdP(y)
\\
&=\int_{-\infty}^{\infty}\int_{0}^{1}
\left[\left(\int_{0}^{1}S_{0}e^{(r-q-\lambda\mu)sT}e^{y1_{s\geq t}}ds-K\right)\right]dtdP(y).
\nonumber
\end{align}
It is clear that
\begin{align}
e^{-|r-q-\lambda\mu|T}\int_{-\infty}^{\infty}\int_{0}^{1}\left[\int_{0}^{1}S_{0}e^{y1_{s\geq t}}ds\right]dtdP(y)
&\leq\int_{-\infty}^{\infty}\int_{0}^{1}\left[\int_{0}^{1}S_{0}e^{(r-q-\lambda\mu)sT}e^{y1_{s\geq t}}ds\right]dtdP(y)
\\
&\leq e^{|r-q-\lambda\mu|T}\int_{-\infty}^{\infty}\int_{0}^{1}\left[\int_{0}^{1}S_{0}e^{y1_{s\geq t}}ds\right]dtdP(y).
\nonumber
\end{align}
By letting $T\rightarrow 0$, for the second term in \eqref{TwoTerms2}, we get 
\begin{align}
&\lim_{T\rightarrow 0}
\int_{-\infty}^{\infty}\int_{0}^{1}
\mathbb{E}\left[\left(\int_{0}^{1}\hat{S}_{sT}e^{y1_{s\geq t}}ds-K\right)\right]dtdP(y)
\\
&=\int_{-\infty}^{\infty}\int_{0}^{1}
\left(\int_{0}^{1}S_{0}e^{y1_{s\geq t}}ds-K\right)dtdP(y).
\nonumber
\end{align}
Hence, applying put-call parity again, the claim in \eqref{ClaimI} is proved.

Therefore, we have
\begin{align}
&\lim_{T\rightarrow 0}\frac{1}{T}
e^{-rT}\mathbb{E}\left[\left(A_{T}-K\right)^{+}1_{N_{T}=1}\right]
\\
&=\lambda\int_{-\infty}^{\infty}\int_{0}^{1}
\left(\int_{0}^{1}S_{0}e^{y1_{s\geq t}}ds-K\right)^{+}dtdP(y)
\nonumber
\\
&=\lambda\int_{-\infty}^{\infty}\int_{0}^{1}
\left(S_{0}t+S_{0}e^{y}(1-t)-K\right)^{+}dtdP(y)
\nonumber
\\
&=\lambda\int_{0}^{1}\int_{\log(\frac{K-S_{0}t}{S_{0}(1-t)})}^{\infty}\left(S_{0}t+S_{0}e^{y}(1-t)-K\right)dP(y)dt.
\nonumber
\end{align}
This completes the proof.
\end{proof}
%%%%%%%%%%%%%%%%%%%%%%%%%%%%%%%%%%%%%%%%%%%%%%%%%%%%%%%%%%%%%%%%%%%%%%%%%%%%%%%%%%%%%%%%%%

\begin{proof}[Proof of Theorem~\ref{ThmPut}]
The proof of Theorem~\ref{ThmPut} is very similar to the proof of Theorem~\ref{ThmI}
is hence omitted here.
\end{proof}

%%%%%%%%%%%%%%%%%%%%%%%%%%%%%%%%%%%%%%%%%%%%%%%%%%%%%%%%%%%%%%%%%%%%%%%%%%%%%%%%%%%%%%%%%%%

\begin{proof}[Proof of Theorem~\ref{ThmIII}]
The price of the ATM Asian call option can be decomposed as:
\begin{equation}\label{TwoTerms}
C(T)=e^{-rT}\mathbb{E}\left[\left(A_{T}-S_{0}\right)^{+}1_{N_{T}=0}\right]
+e^{-rT}\mathbb{E}\left[\left(A_{T}-S_{0}\right)^{+}1_{N_{T}\geq 1}\right],
\end{equation}
where $A_{T}$ is defined in \eqref{A:T:defn}.
Conditional on $N_{T}=0$, $S_{t}=\hat{S}_{t}$, where $\hat{S}_{t}$
satisfies the local volatility model without jumps:
\begin{equation}
\frac{d\hat{S}_{t}}{\hat{S}_{t-}}=(r-q)dt
-\lambda\mu dt+\sigma(\hat{S}_{t})dW_{t},
\qquad
\hat{S}_{0}=S_{0},
\end{equation}
and therefore the first term in \eqref{TwoTerms} can be written as
\begin{equation}
e^{-rT}\mathbb{E}\left[\left(A_{T}-S_{0}\right)^{+}1_{N_{T}=0}\right]
=e^{-rT}e^{-\lambda T}\mathbb{E}\left[\left(\hat{A}_{T}-S_{0}\right)^{+}\right] \,,
\end{equation}
where $\hat{A}_{T}$ is defined in \eqref{hat:A:T:defn}.
By the result for ATM Asian options from \cite{PZLV}, we have
\begin{equation}
\lim_{T\rightarrow 0}\frac{1}{\sqrt{T}}\mathbb{E}\left[\left(\hat{A}_{T}-S_{0}\right)^{+}\right]
=\frac{1}{\sqrt{6\pi}}\sigma(S_{0})S_{0}.
\end{equation}
On the other hand, by H\"{o}lder's inequality, for any $p,p'>1$
and $\frac{1}{p}+\frac{1}{p'}=1$, we have
\begin{align}
e^{-rT}\mathbb{E}\left[\left(A_{T}-S_{0}\right)^{+}1_{N_{T}\geq 1}\right]
\leq e^{-rT}\left(\mathbb{E}\left[\left|A_{T}-S_{0}\right|^{p}\right]\right)^{1/p}
\left(\mathbb{E}\left[1_{N_{T}\geq 1}\right]\right)^{1/p'}
\leq 
C\cdot O(T^{1/p'}).
\end{align}
We can choose $1<p'<2$ (which can be achieved by choosing some $p>2$
which is made possible by Assumption~\ref{Assump1}), which implies that this is $o(T^{1/2})$. 
Thus, we get the desired result.
The proof for the asymptotics for ATM Asian put options is very similar
and is hence omitted here.
\end{proof}

%%%%%%%%%%%%%%%%%%%%%%%%%%%%%%%%%%%%%%%%%%%%%%%%%%%%%%%%%%%%%%%%

\begin{proof}[Proof of Theorem~\ref{Thm:LevyC}]
We can also decompose $X_{t}$ as
\begin{equation}
X_{t}=b_{\epsilon}t+\int_{0}^{t}\int_{|x|\leq\epsilon}x(\mu-\bar{\mu})(dx,ds)
+\int_{0}^{t}\int_{|x|>\epsilon}x\mu(dx,ds) 
= b_{\epsilon} t + M_t^\epsilon + J_{t}^\epsilon\,,
\end{equation}
where we defined
\begin{equation}
J_{t}^{\epsilon}:=\int_{0}^{t}\int_{|x|>\epsilon}x\mu(dx,ds)
\end{equation}
which is a compound Poisson process with jump intensity $\lambda^{\epsilon}=\nu(\{|x|\geq\epsilon\})$
and the jump size distribution $\frac{1}{\lambda^{\epsilon}}1_{|x|\geq\epsilon}\nu(dx)$.
%%%%%%%%%%%%%%%%%%%%
Moreover,
\begin{equation}
M_{t}^{\epsilon} := \int_{0}^{t}\int_{|x|\leq\epsilon}x(\mu-\bar{\mu})(dx,ds)
\end{equation}
is a martingale satisfying $M_0^\epsilon=0$ and with
predictable quadratic variation $\langle M^\epsilon \rangle_t
= t \int_{|x| \leq \epsilon} x^2 \nu(dx)$, 
which is finite due to Assumption~\ref{assump2}.

From the Burkholder-Davis-Gundy inequality, we get for any $w\geq 1$,
\begin{align}
\mathbb{P}\left(\max_{0\leq t\leq T}|M_{t}^{\epsilon}|\geq\delta\right)
&\leq\frac{\mathbb{E}[\max_{0\leq t\leq T}|M_{t}^{\epsilon}|^{w}]}{\delta^{w}}
\\
&\leq\frac{C_{w}\mathbb{E}[\langle M^{\epsilon}\rangle_{T}^{w/2}]}{\delta^{w}}
=\frac{C_{w}}{\delta^{w}}\left(\int_{|x|\leq\epsilon}x^{2}\nu(dx)\right)^{w/2}T^{w/2}\,,
\nonumber
\end{align}
with $C_w>0$ a positive constant and $\int_{|x|\leq\epsilon}x^{2}\nu(dx)$
is finite due to Assumption~\ref{assump2}.

We can estimate the Asian call option price as:
\begin{align}
C(T)&=e^{-rT}\mathbb{E}\left[\left(\frac{1}{T}\int_{0}^{T}
\hat{S}_{t}e^{X_{t}}dt-K\right)^{+}\right]
\\
&=e^{-rT}\mathbb{E}\left[\left(\frac{1}{T}\int_{0}^{T}
\hat{S}_{t}e^{b_{\epsilon}t+M_{t}^{\epsilon}+J_{t}^{\epsilon}}dt-K\right)^{+}1_{\max_{0\leq t\leq T}|M_{t}^{\epsilon}|\geq\delta}\right]
\nonumber
\\
&\qquad
+e^{-rT}\mathbb{E}\left[\left(\frac{1}{T}\int_{0}^{T}
\hat{S}_{t}e^{b_{\epsilon}t+M_{t}^{\epsilon}+J_{t}^{\epsilon}}dt-K\right)^{+}1_{\max_{0\leq t\leq T}|M_{t}^{\epsilon}|\leq\delta}\right]
\nonumber
\\
&\leq e^{-rT}\mathbb{E}\left[\left(\frac{1}{T}\int_{0}^{T}
\hat{S}_{t}e^{b_{\epsilon}t+M_{t}^{\epsilon}+J_{t}^{\epsilon}}dt+K\right)1_{\max_{0\leq t\leq T}|M_{t}^{\epsilon}|\geq\delta}\right]
\nonumber
\\
&\qquad
+e^{-rT}\mathbb{E}\left[\left(\frac{1}{T}\int_{0}^{T}
\hat{S}_{t}e^{b_{\epsilon}t+\delta+J_{t}^{\epsilon}}dt-K\right)^{+}\right]\,,
\nonumber
\end{align}
where we used $e^x \leq e^\delta$ for all $|x|\leq \delta$ in the second term.

Applying the H\"older inequality with $p,p'>1$ and $\frac{1}{p}+\frac{1}{p'}=1$, 
we have further
\begin{align}
C(T) &\leq e^{-rT}\left[\left(\mathbb{E}\left[\left(\frac{1}{T}\int_{0}^{T}
\hat{S}_{t}e^{b_{\epsilon}t+M_{t}^{\epsilon}+J_{t}^{\epsilon}}dt\right)^{p}\right]\right)^{1/p}+K\right]
\left(\mathbb{P}\left(\max_{0\leq t\leq T}|M_{t}^{\epsilon}
|\geq  \delta \right)\right)^{1/p'}
\nonumber
\\
&\qquad
+e^{-rT}\mathbb{E}\left[\left(\frac{1}{T}\int_{0}^{T}
\hat{S}_{t}e^{b_{\epsilon}t+\delta+J_{t}^{\epsilon}}dt-K\right)^{+}\right]
\nonumber
\\
&\leq O\left(T^{\frac{w}{2p'}}\right)+e^{-rT}\mathbb{E}\left[\left(\frac{1}{T}\int_{0}^{T}
\hat{S}_{t}e^{b_{\epsilon}t+\delta+J_{t}^{\epsilon}}dt-K\right)^{+}\right],
\nonumber
\end{align}
where the first term is $o(T)$ by choosing $w>2p'$
and the term $\mathbb{E}\left[\left(\frac{1}{T}\int_{0}^{T}
\hat{S}_{t}e^{b_{\epsilon}t+M_{t}^{\epsilon}+J_{t}^{\epsilon}}dt\right)^{p}\right]$
can be bounded similarly as in the proof of Theorem~\ref{ThmI} 
by applying Assumption~\ref{assump2}.

Hence, we get for sufficiently small $\delta>0$,
\begin{align}\label{CUpperBound}
\limsup_{T\rightarrow 0}\frac{C(T)}{T}
&\leq
\limsup_{T\rightarrow 0}\frac{e^{-rT}}{T}\mathbb{E}\left[\left(\frac{1}{T}\int_{0}^{T}
\hat{S}_{t}e^{b_{\epsilon}t+\delta+J_{t}^{\epsilon}}dt-K\right)^{+}\right]
\\
&\leq
\int_{0}^{1}\int_{\log(\frac{K-S_{0}e^{\delta}t}{S_{0}e^{\delta}(1-t)})}^{\infty}1_{|y|\geq\epsilon}
\left(S_{0}e^{\delta}t+S_{0}e^{\delta}e^{y}(1-t)-K\right)\nu(dy)dt
\nonumber
\\
&=\int_{0}^{1}\int_{\log(\frac{K-S_{0}e^{\delta}t}{S_{0}e^{\delta}(1-t)})}^{\infty}
\left(S_{0}e^{\delta}t+S_{0}e^{\delta}e^{y}(1-t)-K\right)\nu(dy)dt,
\nonumber
\end{align}
where we applied Theorem~\ref{ThmI} (by using Assumption~\ref{assump2}) to obtain the second line above 
and in the last line we took $\epsilon>0$ to be sufficiently small so that $\epsilon<\inf_{0<t<1}\log(\frac{K-S_{0}t}{S_{0}e^{-\delta}(1-t)})=\log(\frac{K}{S_{0}e^{-\delta}})$.

On the other hand, using $e^x \geq e^{-\delta}$ for all $|x|\leq \delta$, we have the lower bound
\begin{align}
C(T)&
\geq 
e^{-rT}\mathbb{E}\left[\left(\frac{1}{T}\int_{0}^{T}
\hat{S}_{t}e^{b_{\epsilon}t+M_{t}^{\epsilon}+J_{t}^{\epsilon}}dt-K\right)^{+}1_{\max_{0\leq t\leq T}|M_{t}^{\epsilon}|\leq\delta}\right]
\\
&\geq
e^{-rT}\mathbb{E}\left[\left(\frac{1}{T}\int_{0}^{T}
\hat{S}_{t}e^{b_{\epsilon}t-\delta+J_{t}^{\epsilon}}dt-K\right)^{+}1_{\max_{0\leq t\leq T}|M_{t}^{\epsilon}|\leq\delta}\right]
\nonumber
\\
&=e^{-rT}\mathbb{E}\left[\left(\frac{1}{T}\int_{0}^{T}
\hat{S}_{t}e^{b_{\epsilon}t-\delta+J_{t}^{\epsilon}}dt-K\right)^{+}\right]
\nonumber
\\
&\qquad\qquad
-e^{-rT}\mathbb{E}\left[\left(\frac{1}{T}\int_{0}^{T}
\hat{S}_{t}e^{b_{\epsilon}t-\delta+J_{t}^{\epsilon}}dt-K\right)^{+}1_{\max_{0\leq t\leq T}|M_{t}^{\epsilon}|\geq\delta}\right].
\nonumber
\end{align}

Proceeding in a similar way as before, we get for sufficiently small $\delta>0$
\begin{align}\label{CLowerBound}
\liminf_{T\rightarrow 0}\frac{C(T)}{T}
&\geq
\int_{0}^{1}\int_{\log(\frac{K-S_{0}e^{-\delta t}}{S_{0}e^{-\delta}(1-t)})}^{\infty}1_{|y|\geq\epsilon}\left(S_{0}e^{-\delta}t+S_{0}e^{-\delta}e^{y}(1-t)-K\right)\nu(dy)dt
\\
&=\int_{0}^{1}\int_{\log(\frac{K-S_{0}t}{S_{0}e^{-\delta}(1-t)})}^{\infty}\left(S_{0}t+S_{0}e^{-\delta}e^{y}(1-t)-K\right)\nu(dy)dt,
\nonumber
\end{align}
where in the last line we take $\epsilon>0$ to be sufficiently small so that $\epsilon<\inf_{0<t<1}\log(\frac{K-S_{0}t}{S_{0}e^{-\delta}(1-t)})=\log(\frac{K}{S_{0}e^{-\delta}})$.

Noting that the choice of $\delta>0$ is arbitrary, we let $\delta\rightarrow 0$.
Combining the lower bound (\ref{CLowerBound}) with the  upper bound (\ref{CUpperBound}), the stated result follows. 
\end{proof}

%%%%%%%%%%%%%%%%%%%%%%%%%%%%%%%%%%%%%%%%%%%%%%%%%%%%%%%%%%%%%%%%%%%%%%%%%%%%%%%%%%%%%%

\begin{proof}[Proof of Theorem~\ref{Thm:LevyP}]
The proof of Theorem~\ref{Thm:LevyP} is very similar to 
the proof of Theorem~\ref{Thm:LevyC} and is hence omitted here.
\end{proof}

%%%%%%%%%%%%%%%%%%%%%%%%%%%%%%%%%%%%%%%%%%%%%%%%%%%%%%%%%%%%%%%%%%%%%%%%%%%%%

\begin{proof}[Proof of Theorem~\ref{Thm:LevyC:floating}]
The proof is similar to that of Theorem~\ref{ThmI} and Theorem~\ref{Thm:LevyC},
and we will only provide an outline here. As in the proof of Theorem~\ref{Thm:LevyC}, 
one can decompose the L\'{e}vy jumps into big jumps and small jumps, 
and it suffices to consider the big jumps, which boils the problem down 
to the local volatility model with compound Poisson jumps that is considered in Theorem~\ref{ThmI}. 
For the diffusion part (local volatility part), the probability it deviates away from $S_{0}$
is exponentially small in $T$, i.e. $e^{-O(1/T)}$ which is negligible. 
The leading order asymptotics of $C_{f}(T)$ as $T\rightarrow 0$, is contributed
by one single jump, and by following the same arguments as in the proof of Theorem~\ref{ThmI}, we obtain
\begin{align}
\lim_{T\rightarrow 0}\frac{C_{f}(T)}{T}
&=\int_{0}^{1}\int_{-\infty}^{\infty}\left(\kappa S_{0}e^{y}-S_{0}t-S_{0}e^{y}(1-t)\right)^{+}\nu(dy)dt
\\
&=\int_{1-\kappa}^{1}\int_{\log(\frac{t}{\kappa-(1-t)})}^{\infty}S_{0}\left(\kappa e^{y}-t-e^{y}(1-t)\right)\nu(dy)dt\,.
\nonumber
\end{align}
This completes the proof.
\end{proof}
%%%%%%%%%%%%%%%%%%%%%%%%%%%%%%%%%%%%%%%%%%%%%%%%%%%%%%%%%%%%%%%%%%%%%%%%%%%%%%%%%%%%%%%%%%

\begin{proof}[Proof of Theorem~\ref{Thm:LevyP:floating}]
The proof of Theorem~\ref{Thm:LevyP:floating} is very similar to 
the proof of Theorem~\ref{Thm:LevyC:floating} and is hence omitted here.
\end{proof}

%%%%%%%%%%%%%%%%%%%%%%%%%%%%%%%%%%%%%%%%%%%%%%%%%%%%%%%%%%%%%%%%%%%%%%%%%%%%%%%%%%%%%%

\begin{proof}[Proof of Theorem~\ref{Thm:ATM:floating}]
The proof is similar to that of Theorem~\ref{ThmIII} 
by leveraging the similar estimates as in the proof of Theorem~\ref{ThmIII}
and the short-maturity asymptotics for ATM Asian floating strike options
under the local volatility model (without jumps) in \cite{PZLV}.
\end{proof}

%%%%%%%%%%%%%%%%%%%%%%%%%%%%%%%%%%%%%%%%%%%%%%%%%%%%
\begin{proof}[Proof of Proposition~\ref{prop:Merton}]
(i) The result for the OTM Asian call option follows from Theorem \ref{ThmI}. 
The $y$ integral appearing in the short maturity asymptotics (\ref{aC}) 
can be evaluated in closed form using the expression (\ref{dPMerton}) for the jump size distribution $dP(y)$ in the Merton model. This yields the result stated. 

(ii) The result for the OTM Asian put options follows from Theorem~\ref{ThmPut} and is analogous to that for the calls, so it is omitted.
\end{proof}

%%%%%%%%%%%%%%%%%%%%%%%%%%%%%%%%%%%%%%%%%%%%%%
\begin{proof}[Proof of Proposition~\ref{prop:floatMerton}]
The proof of Proposition~\ref{prop:floatMerton} is similar to that of Proposition~\ref{prop:Merton}
and is hence omitted here.
\end{proof}

%%%%%%%%%%%%%%%%%%%%%%%%%%%%%%%%%%%%

\begin{proof}[Proof of Proposition~\ref{prop:Kou}]
For OTM Asian call options, i.e., $K>S_{0}$, we have
for any $0<t<1$, $\log(\frac{K-S_{0}t}{S_{0}(1-t)})>0$, and thus we can compute that
\begin{align}
\lim_{T\rightarrow 0}\frac{C(T)}{T}
&=\lambda\int_{0}^{1}\int_{\log(\frac{K-S_{0}t}{S_{0}(1-t)})}^{\infty}\left(S_{0}e^{y}(1-t)-(K-S_{0}t)\right)
\alpha\eta_{1}e^{-\eta_{1}y}dydt
\\
%&=\lambda\alpha\frac{\eta_{1}}{\eta_{1}-1}S_{0}\int_{0}^{1}\left(\frac{S_{0}}{K-S_{0}t}\right)^{\eta_{1}-1}(1-t)^{\eta_{1}}dt
%\nonumber
%\\
%&\qquad\qquad\qquad
%-\lambda\alpha\int_{0}^{1}(K-%S_{0}t)\left(\frac{S_{0}}{K-%S_{0}t}\right)^{\eta_{1}}(1-t)^{\eta_{1}}dt
%\nonumber
%\\
&=\frac{\lambda\alpha}{\eta_{1}-1}S_{0}\int_{0}^{1}\left(\frac{K}{S_{0}}-t\right)^{1-\eta_{1}}(1-t)^{\eta_{1}}dt
\nonumber
\\
&=\frac{\lambda\alpha S_{0}}{\eta_{1}^{2}-1}\left(\frac{K}{S_{0}}\right)^{1-\eta_{1}}
{}_{2}F_{1}\left(1,\eta_{1}-1;\eta_{1}+2;\frac{S_{0}}{K}\right).
\nonumber
\end{align}

(ii) 
The OTM Asian put option result follows from Theorem \ref{ThmPut}.
For this case we have $K<S_{0}$ which implies
$\log(\frac{K-S_{0}t}{S_{0}(1-t)})\leq 0$ for any $0<t<K/S_0$.
Therefore, we can compute that 
\begin{align}
\lim_{T\rightarrow 0}\frac{P(T)}{T}
&=\lambda\int_{0}^{\frac{K}{S_{0}}}\int_{-\infty}^{\log(\frac{K-S_{0}t}{S_{0}(1-t)})}\left((K-S_{0}t)-S_{0}e^{y}(1-t)\right)
(1-\alpha)\eta_{2}e^{\eta_{2}y}dydt
\\
&=\frac{\lambda(1-\alpha)}{\eta_{2}+1}S_{0}\int_{0}^{\frac{K}{S_{0}}}\left(\frac{K}{S_{0}}-t\right)^{1+\eta_{2}}(1-t)^{-\eta_{2}}dt
\nonumber
\\
&=\frac{\lambda(1-\alpha)}{\eta_{2}+1}\frac{K^{2}}{S_{0}}\int_{0}^{1}\left(\frac{S_{0}}{K}-t\right)^{-\eta_{2}}(1-t)^{1+\eta_{2}}dt
\nonumber
\\
&=\frac{\lambda(1-\alpha)S_0}{(\eta_{2}+1)(\eta_{2}+1)}\left(\frac{K}{S_{0}}\right)^{\eta_{2}+2}
{}_{2}F_{1}\left(1,\eta_{2};\eta_{2}+3;\frac{K}{S_{0}}\right).
\nonumber
\end{align}
This completes the proof.
\end{proof} 

%%%%%%%%%%%%%%%%%%%%%%%%%%%%%%%%%%%%%%%%%%%%%%%%%%%%%%%%%%%%%%%%%%%%%%%%%%%%%%%%%%%%%%%%%

\begin{proof}[Proof of Lemma~\ref{lem:VG}]
Substituting the CGMY density (\ref{CGMYmeasure}), the integral appearing in the statement of Assumption~\ref{assump2} takes the form
\begin{align}
I &= \int_{-\infty}^\infty \min(1,x^2) e^{\theta x} \nu(dx) 
\\
&=
C\int_{-\infty}^{-1} e^{(G+\theta )x} \frac{dx}{|x|^{1+Y}} + 
C\int_{-1}^0 |x|^{1-Y} e^{(G+\theta ) x} dx\nonumber
\\
&\qquad\qquad\qquad\qquad +
C \int_0^1 x^{1-Y} e^{-(M-\theta) x} dx + C \int_1^\infty e^{-(M-\theta) x} \frac{dx}{x^{1+Y}} \,.
\nonumber
\end{align}
This is finite for some $\theta>2$ provided that $Y<2$, $G>-2$ and $M>2$. The second condition is automatic since $G>0$. We conclude that Assumption~\ref{assump2} holds if $M>2$ and $Y<2$. 

Expressing $M=1/\eta_p$ and using (\ref{etap}) gives the condition (\ref{VGcond}) for the limiting case of the VG model.
\end{proof}

%%%%%%%%%%%%%%%%%%%%%%%%%%%%%%%%%%%%%%%%%%%%%%%%%%%%%%%%%%%%%%%%%%%%%%%%%%%%%%%

\begin{proof}[Proof of Proposition~\ref{prop:VG}]
We present only the OTM Asian call case, the Asian put is analogous. 
The starting point is Theorem~\ref{Thm:LevyC}. Substituting the L\'evy measure (\ref{VGmeasure}) in CGMY form with $Y=0$, the $y$-integration has the form:
\begin{align}
&\int_{\log y_0(t)}^\infty (S_0 t + S_0 e^y (1-t)-K) C e^{-M y} \frac{dy}{y}  \\
&= (S_0 t - K) C \int_{\log y_0}^\infty e^{-M y} \frac{dy}{y} +
S_0 (1-t) C \int_{\log y_0}^\infty e^{-(M-1) y} \frac{dy}{y}\,, \nonumber
\end{align}
where denoted here $y_0(t) := \frac{K/S_0-t}{1-t}$. 

The two integrals are similar, and are evaluated in terms of $\mathrm{li}(z)$ by a few changes of variable, as follows. In a first step we introduce $z=e^y$ and obtain
\begin{equation}
J(t,M) = \int_{\log y_0}^\infty e^{-My} \frac{dy}{y}  = 
\int_{y_0}^\infty z^{-M-1} \frac{dz}{\log z} \,.
\end{equation}
Next introduce $dw =  z^{-M-1} dz$ which gives $w = - \frac{1}{M}z^{-M}$.
%This is inverted as $z = (-Mw)^{1/M}$.
The integral becomes
\begin{equation}
J(t,M) = \int_0^{\frac{1}{M} (y_0)^{-M}} \frac{dw}{\frac{1}{M} \log(Mw)} = 
\int_0^{(y_0)^{-M}} \frac{du}{\log u}\,.
\end{equation}
Using the definition of the $\mathrm{li}(z)$ function this gives (\ref{Jres}).
\end{proof}

\end{document}